\newtheorem{theorem}{Theorem}
\newtheorem{lemma}{Lemma}
\newtheorem{definition}{Definition}
\newtheorem{corollary}{Corollary}
\newtheorem{proposition}{Proposition}
\newtheorem{remark}{Remark}
\newcommand{\mR}{\mathbb{R}}
\newcommand{\mC}{\mathbb{C}}
\newcommand{\mN}{\mathbb{N}}
\newcommand{\mE}{\mathbb{E}}
\newcommand{\mZ}{\mathbb{Z}}
\newcommand{\mS}{\mathbb{S}}
\newcommand{\cM}{\mathcal{M}}
\newcommand{\cP}{\mathcal{P}}
\newcommand{\cC}{\mathcal{C}}
\newcommand{\ux}{\underline{x}}
\newcommand{\uxb}{\underline{x} \grave{}}
\newcommand{\uxi}{\underline{\xi}}
\newcommand{\pj}{\partial_{x_j}}
\newcommand{\pjb}{\partial_{{x \grave{}}_{j}}}
\newcommand{\pI}{\partial_{x_i}}
\newcommand{\px}{\partial_x}
\newcommand{\upx}{\partial_{\underline{x}}}
\newcommand{\upxb}{\partial_{\underline{{x \grave{}}} }}
\begin{document}
\title{Hermite and Gegenbauer polynomials in superspace using Clifford analysis}

\author{H.\ De Bie\thanks{Research assistant supported by the Fund for Scientific Research Flanders (F.W.O.-Vlaanderen), E-mail: {\tt Hendrik.DeBie@UGent.be}} \and F.\ Sommen\thanks{E-mail: {\tt fs@cage.ugent.be}}
}

\date{\small{Clifford Research Group -- Department of Mathematical Analysis}\\
\small{Faculty of Engineering -- Ghent University\\ Galglaan 2, 9000 Gent,
Belgium}}

\maketitle

\begin{abstract}
The Clifford-Hermite and the Clifford-Gegenbauer polynomials of standard Clifford analysis are generalized to the new framework of Clifford analysis in superspace in a merely symbolic way. This means that one does not a priori need an integration theory in superspace. 
Furthermore a lot of basic properties, such as orthogonality relations, differential equations and recursion formulae are proven.
Finally, an interesting physical application of the super Clifford-Hermite polynomials is discussed, thus giving an interpretation to the super-dimension.
\end{abstract}

\textbf{MSC 2000 :}   30G35, 58C50, 42C05\\
\textbf{PACS 2006 :}  02.30.Fn, 02.30.Gp\\
\noindent
\textbf{Keywords :}   Clifford analysis, Hermite polynomials, Gegenbauer polynomials, superspace, Dirac operator

\newpage
\section{Introduction}

Clifford analysis offers a natural generalization of the theory of complex holomorphic functions in the plane to higher dimension, in which the Dirac operator $\upx$ generalizes the Cauchy-Riemann operator $\partial_z = \partial_x + i\partial_y$. Basic references for this mathematics field are \cite{MR697564,MR1169463,MR1130821}.

In some recent papers (\cite{DBS1,DBS4,DBS2,DBS5}) we have started to develop an extension of Clifford analysis to superspace, i.e. a space with not only commuting but also anti-commuting variables. These superspaces are of great importance in modern theoretical physics. The basic framework, containing all necessary symbols and operators such as Dirac, Euler and Gamma operators was developed in \cite{DBS1} and \cite{DBS4}. Next in \cite{DBS2,DBS5} we studied spherical monogenics, which are polynomial null-solutions of the super Dirac operator. We were able to construct a.o. bases for spaces of spherical monogenics and a formal type of integration over the supersphere leading to the Berezin integral (see \cite{MR732126}).

An important tool in Clifford analysis are orthogonal polynomials such as the Clifford-Hermite and the Clifford-Gegenbauer polynomials. They were introduced in the PhD-thesis of Cnops (\cite{PhDCnops}, see also \cite{MR1169463}) as higher dimensional generalizations of the classical Hermite and Gegenbauer polynomials and have some interesting applications, e.g. in multi-dimensional waveletanalysis (see \cite{MR2106721} and \cite{MR2086651}). Moreover, in \cite{PhDCnops} it is also proven that under certain assumptions these are the only types of classical orthogonal polynomials that can be generalized to the framework of Clifford analysis.

In the present paper these special functions are extended to the superspace setting. This is done by a careful analysis of their definition in the Euclidean case, providing us with a canonical way to generalize them. The main advantage of our approach is that in the resulting special functions one cannot really see the difference between the Euclidean and the superspace case. The only difference is that instead of the Euclidean dimension one has to consider the so-called super-dimension (to be introduced in section \ref{superspaceframework}) in all formulae.

Note that also other approaches are possible in the construction of special polynomials in superspace. We refer the reader to a.o. \cite{MR2025382}.

The paper is organized as follows: first a brief introduction to Clifford analysis on superspace is given (section 2). Then the Clifford-Hermite polynomials (section 3) and the Clifford-Gegenbauer polynomials (section 4) are defined and some of their basic properties are proven: orthogonality, recurrence relations, differential equation, Rodrigues formula etc. Finally, an interesting physical application and a possible further extension are discussed.

\section{The superspace framework of Clifford analysis}
\label{superspaceframework}

Superspaces are spaces where one considers not only commuting but also anti-commuting co-ordinates (see a.o. \cite{MR732126,MR565567,MR574696,MR1175751}). In our approach to superspace (see \cite{DBS1,DBS4}), we start with the real algebra $\cP = \mbox{Alg}(x_i, e_i; {x \grave{}}_j,{e \grave{}}_j)$, $i=1,\ldots,m$, $j=1,\ldots,2n$
generated by

\begin{itemize}
\item $m$ commuting variables $x_i$ and $m$ orthogonal Clifford generators $e_i$
\item $2n$ anti-commuting variables ${x \grave{}}_i$ and $2n$ symplectic Clifford generators ${e \grave{}}_i$
\end{itemize}
subject to the multiplication relations
\[ \left \{
\begin{array}{l} 
x_i x_j =  x_j x_i\\
{x \grave{}}_i {x \grave{}}_j =  - {x \grave{}}_j {x \grave{}}_i\\
x_i {x \grave{}}_j =  {x \grave{}}_j x_i\\
\end{array} \right .
\quad \mbox{and} \quad
\left \{ \begin{array}{l}
e_j e_k + e_k e_j = -2 \delta_{jk}\\
{e \grave{}}_{2j} {e \grave{}}_{2k} -{e \grave{}}_{2k} {e \grave{}}_{2j}=0\\
{e \grave{}}_{2j-1} {e \grave{}}_{2k-1} -{e \grave{}}_{2k-1} {e \grave{}}_{2j-1}=0\\
{e \grave{}}_{2j-1} {e \grave{}}_{2k} -{e \grave{}}_{2k} {e \grave{}}_{2j-1}=\delta_{jk}\\
e_j {e \grave{}}_{k} +{e \grave{}}_{k} e_j = 0\\
\end{array} \right .
\]
and where moreover all elements $e_i$, ${e \grave{}}_j$ commute with all elements $x_i$, ${x \grave{}}_j$. The algebra generated by all the $e_i$, ${e \grave{}}_j$ is denoted by $\cC$. In the case where $n = 0$ we have that $\cC \cong \mR_{0,m}$, the standard orthogonal Clifford algebra with signature $(-1,\ldots,-1)$. The main anti-involution $\bar{\; . \;}$ on $\mR_{0,m}$ is defined by 

\[
\begin{array}{l}
\overline{a b} = \overline{b} \, \overline{a}, \quad a,b \in \mR_{0,m}\\
\overline{e_j} = - e_j.
\end{array}
\]

The most important element of the algebra $\cP$ is the vector variable $x = \ux+\uxb$ with

\[
\begin{array}{lll}
\ux &=& \sum_{i=1}^m x_i e_i\\
&& \vspace{-2mm}\\
\uxb &=& \sum_{j=1}^{2n} {x \grave{}}_{j} {e \grave{}}_{j}.
\end{array}
\]

The square of $x$ is scalar-valued and equals $x^2 =  \sum_{j=1}^n {x\grave{}}_{2j-1} {x\grave{}}_{2j}  -  \sum_{j=1}^m x_j^2$.

On the other hand the super Dirac operator is defined as

\[
\px = \upxb-\upx = 2 \sum_{j=1}^{n} \left( {e \grave{}}_{2j} \partial_{{x\grave{}}_{2j-1}} - {e \grave{}}_{2j-1} \partial_{{x\grave{}}_{2j}}  \right)-\sum_{j=1}^m e_j \pj.
\]

Its square is the super Laplace operator

\[
\Delta = \px^2 =4 \sum_{j=1}^n \partial_{{x \grave{}}_{2j-1}} \partial_{{x \grave{}}_{2j}} -\sum_{j=1}^{m} \pj^2.
\]

If we let $\px$ act on $x$ we find that

\[
\px x = m-2n = M
\]

where $M$ is the so-called super-dimension. This numerical parameter gives a global characterization of our superspace and will be very important in the sequel.

Furthermore we introduce the super Euler operator
\begin{eqnarray*}
\mE &=& \sum_{j=1}^m x_j \pj+\sum_{j=1}^{2n} {x \grave{}}_{j} \pjb.
\end{eqnarray*}

This operator allow us to decompose $\cP$ as
\begin{eqnarray*}
\cP &=& \bigoplus_{k=0}^{\infty} \cP_k, \quad \cP_k=\left\{ \omega \in \cP \; | \; \mE \omega=k \omega \right\}.
\end{eqnarray*}

Now we have the following

\begin{definition}
An element $F \in \cP$ is a spherical monogenic of degree $k$ if it satisfies
\begin{eqnarray*}
\px F &=&0\\
\mE F &=& kF, \quad \mbox{i.e. $F \in \cP_k$}.
\end{eqnarray*}
Moreover the space of all spherical monogenics of degree $k$ is denoted by $\cM_k$.
\end{definition}

The basic calculational rules for the Dirac operator are given in the following lemma (see \cite{DBS1}).

\begin{lemma}
Let $s \in \mN$ and $R_k \in \cP_k$, then
\begin{eqnarray*}
\px(x^{2s} R_k) &=& 2 s x^{2s-1}R_k + x^{2s} \px R_k\\
\px(x^{2s+1} R_k) &=& (2k + M + 2s) x^{2s}R_k - x^{2s+1} \px R_k.
\end{eqnarray*}
\end{lemma}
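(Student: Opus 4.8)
The plan is to deduce both identities from a single operator anticommutation relation, after which they follow by a short induction. Concretely, I would first establish that, as operators on $\cP$,
\[
\px x + x \px = 2\mE + M ,
\]
where on the left $x$ denotes multiplication by the vector variable and $\px$ is applied to the whole product. Granting this, if $\omega \in \cP_\ell$ is homogeneous of degree $\ell$ then $\mE \omega = \ell \omega$, so $\px(x\omega) = (2\ell + M)\omega - x\px\omega$. This is exactly the $s=0$ case of the second formula (with $\omega = R_k$, $\ell = k$), and it is the seed of the induction.

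The main work, and the step I expect to be the genuine obstacle, is proving this anticommutation relation, because the bosonic and fermionic sectors behave differently and one must control the signs produced by the symplectic Clifford relations and by the anticommuting variables. I would split $\px = \upxb - \upx$ and $x = \ux + \uxb$ and expand $\px x + x \px$ into four anticommutators. For the purely bosonic part one recovers the classical relation $\upx\,\ux + \ux\,\upx = -2\sum_i x_i \pI - m$, whence $(-\upx)\ux + \ux(-\upx) = 2\sum_i x_i \pI + m$. For the purely fermionic part a direct computation using ${e \grave{}}_{2j-1}{e \grave{}}_{2k} - {e \grave{}}_{2k}{e \grave{}}_{2j-1} = \delta_{jk}$ gives $\upxb\,\uxb + \uxb\,\upxb = 2\sum_j {x \grave{}}_{j}\pjb - 2n$; as a check, applied to $1$ this reproduces $\upxb\,\uxb = -2n$, consistent with $\px x = m-2n = M$. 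The two mixed anticommutators $(-\upx)\uxb + \uxb(-\upx)$ and $\upxb\,\ux + \ux\,\upxb$ vanish: since $e_j {e \grave{}}_{k} + {e \grave{}}_{k} e_j = 0$ and the bosonic coordinates are independent of the fermionic ones (and vice versa), moving the relevant derivative through the opposite-type vector factor produces exactly one sign, so that $e_j \uxb + \uxb e_j = 0$ and ${e \grave{}}_\alpha \ux + \ux {e \grave{}}_\alpha = 0$ force the two terms to cancel. Adding the four contributions yields $2\mE + M$.

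With the relation in hand, I would set $A_s = \px(x^s R_k)$ and, writing $x^s R_k = x(x^{s-1}R_k)$ with $x^{s-1}R_k \in \cP_{s-1+k}$, obtain the recursion
\[
A_s = \bigl(2(s-1+k) + M\bigr)\, x^{s-1}R_k - x\,A_{s-1}, \qquad A_0 = \px R_k .
\]
Both displayed formulas then follow by induction on $s$: starting from the base case $A_0 = \px R_k$ (the even formula at $s=0$) and assuming the even formula $A_{2s} = 2s\,x^{2s-1}R_k + x^{2s}\px R_k$, one application of the recursion produces the odd formula for $A_{2s+1}$, and a second application returns the even formula at $s+1$. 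Note that the scalar, hence central, character of $x^2$ is not actually needed here: only the homogeneity of the intermediate powers enters, so once the anticommutator is secured the remainder of the argument is purely formal.
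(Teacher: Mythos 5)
Your proof is correct. The paper itself gives no proof of this lemma (it is quoted from the reference [DBS1]), but your route is exactly the canonical one: the anticommutator $\px x + x\px = 2\mE + M$ that you isolate as the key step is precisely the relation the paper later records as the defining axiom of the framework in its section on possible generalizations, and your verification of the four sector contributions (bosonic $-2\sum_i x_i\partial_{x_i}-m$ sign-flipped, fermionic $2\sum_j {x\grave{}}_j\partial_{{x\grave{}}_j}-2n$, vanishing mixed terms) together with the two-step induction on $A_s=\px(x^sR_k)$ reproduces both displayed formulas without gaps.
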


We then immediately have that
\begin{corollary}
Let $s \in \mN$ and $P_k \in \cM_k$, then
\begin{eqnarray*}
\px(x^{2s} P_k) &=& 2 s x^{2s-1}P_k\\
\px(x^{2s+1} P_k) &=& (2k + M + 2s) x^{2s}P_k.
\end{eqnarray*}
\label{basicrel}
\end{corollary}

These formulae lead to (see \cite{DBS2})

\begin{theorem}[Fischer decomposition]
Let $M \not \in -2 \mN$. Then $\cP_k$ decomposes as
\begin{equation*}
\cP_k = \bigoplus_{i=0}^k x^i \cM_{k-i}.
\end{equation*}
\end{theorem}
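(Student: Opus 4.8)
The plan is to prove the decomposition by induction on $k$, using the single-step splitting $\cP_k = \cM_k \oplus x\cP_{k-1}$ as the engine: once this is established, feeding in the induction hypothesis for $\cP_{k-1}$ and iterating produces the full decomposition. The base case $k=0$ is immediate, since $\px$ lowers the degree and therefore annihilates every constant, giving $\cP_0 = \cM_0 = x^0\cM_0$. For the inductive step I would assume $\cP_{k-1} = \bigoplus_{i=0}^{k-1} x^i \cM_{k-1-i}$ and focus all the work on the operator $L : \cP_{k-1} \to \cP_{k-1}$ defined by $L(S) = \px(xS)$.

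The heart of the argument is to diagonalise $L$ over the decomposition provided by the induction hypothesis. For $P \in \cM_{k-1-i}$, Corollary \ref{basicrel} applies to $\px(x^{i+1}P)$ and yields $L(x^i P) = c_i\, x^i P$, where $c_i = i+1$ when $i$ is odd and $c_i = 2(k-1-i)+M+i = M + 2\bigl((k-1)-i/2\bigr)$ when $i$ is even. Thus $L$ maps each summand $x^i\cM_{k-1-i}$ into itself and acts there as multiplication by the scalar $c_i$. The key point, and the place where the hypothesis is needed, is that every $c_i$ is nonzero precisely when $M \notin -2\mN$: for odd $i$ this is clear, while for even $i$ the coefficient has the form $M + 2r$ with $r = (k-1)-i/2 \geq 0$ a nonnegative integer, which vanishes if and only if $M \in -2\mN$. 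Under the stated assumption $L$ is therefore a diagonal operator with nonzero scalar entries, hence a bijection of $\cP_{k-1}$.

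From the bijectivity of $L$ I would extract both halves of $\cP_k = \cM_k \oplus x\cP_{k-1}$. Surjectivity gives spanning: given $R_k \in \cP_k$, choose $S_{k-1} \in \cP_{k-1}$ with $L(S_{k-1}) = \px R_k$; then $\px(R_k - xS_{k-1}) = 0$, so $P_k := R_k - xS_{k-1} \in \cM_k$ and $R_k = P_k + xS_{k-1}$. Injectivity of $L$ gives directness of this sum, and it also shows that $x\cdot : \cP_{k-1} \to \cP_k$ is injective, since $xS = 0$ forces $L(S)=0$ and hence $S=0$. This injectivity lets me transport the induction hypothesis through multiplication by $x$, obtaining $x\cP_{k-1} = \bigoplus_{j=1}^k x^j\cM_{k-j}$ as a direct sum, so that $\cP_k = \cM_k \oplus \bigoplus_{j=1}^k x^j\cM_{k-j} = \bigoplus_{j=0}^k x^j \cM_{k-j}$, closing the induction.

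The main obstacle is precisely the coefficient analysis in the middle step: everything reduces to verifying that the eigenvalues $c_i$ of $L$ never vanish. This is exactly the role of the condition $M \notin -2\mN$. In the purely Euclidean setting one has $M = m > 0$ and the coefficients are automatically positive, but in superspace the super-dimension $M = m-2n$ can land on the non-positive even integers, and these are the values that must be excluded for the decomposition to hold.
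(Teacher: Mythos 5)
Your proposal is correct, and it follows exactly the route the paper indicates: it states that Corollary \ref{basicrel} ``leads to'' the Fischer decomposition and defers the details to the reference [DBS2], where the argument is precisely the splitting $\cP_k = \cM_k \oplus x\cP_{k-1}$ obtained by inverting $S \mapsto \px(xS)$ on the inductively decomposed $\cP_{k-1}$. Your eigenvalue computation ($c_i = i+1$ for $i$ odd, $c_i = M+2((k-1)-i/2)$ for $i$ even) and the identification of $M \notin -2\mN$ as exactly the condition for invertibility are both accurate.
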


Finally, the dimension of the space $\cM_k$ (i.e. the rank as a free $\cC$-module) can be calculated using a Cauchy-Kowalewskaia extension principle and is given by (see \cite{DBS2})

\[
\dim \, \cM_k = \sum_{i=0}^{\min(k,2n)} \binom{2n}{i} \binom{k-i+m-2}{m-2}.
\]

\begin{remark}
It is also possible to consider larger superalgebras than the algebra $\cP$. This is e.g. necessary if one wants to construct a fundamental solution for the super Dirac operator (see \cite{DBS6}).
\end{remark}

\section{Clifford-Hermite polynomials in superspace}
\label{clhermpol}

In classical Clifford analysis, the Clifford-Hermite polynomials (see \cite{MR1169463}) are defined using the following inner product

\[
(f,g) = \int_{\mR^m} \overline{f (\ux)} g(\ux) e^{\ux^2}dV (\ux)
\]

on $L_2(\mR^m;e^{\ux^2})$, where $\bar{\; . \;}$ is the main anti-involution on the Clifford algebra $\mR_{0,m}$. For our purpose, it suffices to know this inner product for functions of the form $f=\ux^s P_k$, $g=\ux^t P_l$ with $P_k$ and $P_l$ spherical monogenics of degree $k$ respectively $l$ in $\mR^m$. The previous integral can then be rewritten, using spherical co-ordinates $\ux = r \uxi$, as

\begin{eqnarray*}
(\ux^s P_k,\ux^t P_l) &=&\int_{\mR^m}  \overline{P_k} \bar \ux^s \ux^t P_l e^{\ux^2} dV(\ux)\\
&=& \int_0^{\infty} r^k r^s r^t r^l e^{-r^2} r^{m-1}dr \int_{\mS^{m-1}} \overline{P_k(\uxi)}  \underline{\overline{ \xi}}^s \underline{\xi}^t P_l(\uxi) d\Sigma(\uxi)\\
&=&  \frac{1}{2} \Gamma(\frac{k+s+t+l+m}{2}) \int_{\mS^{m-1}} \overline{P_k(\uxi)}  \underline{\overline{ \xi}}^s \underline{\xi}^t P_l(\uxi) d\Sigma(\uxi)\\
\end{eqnarray*}

with $\Gamma(.)$ the Gamma-function. 
Note that this inner product consists of two parts: a radial part and an angular part which is an integration over the unit-sphere. If we consider e.g. the case $s=2a$, $t=2b$ the angular integral simplifies to

\begin{eqnarray*}
\int_{\mS^{m-1}} \overline{P_k(\uxi)}  \underline{\overline{ \xi}}^{2a} \underline{\xi}^{2b} P_l(\uxi) d\Sigma(\uxi) = (-1)^{a+b} \int_{\mS^{m-1}} \overline{P_k(\uxi)}  P_l(\uxi) d\Sigma(\uxi).
\end{eqnarray*}

The remaining integral is an inner product on the space of spherical monogenics and can be left out of our discussion.

So, by introducing the following real vector space in our super-setting

\[
R(P_k) = \left\{ \sum_{j=0}^n a_j x^j P_k  \; \; | \; \; n \in \mN, \, a_j \in \mR \right\}
\]

where $P_k$ is a spherical monogenic of degree $k$, fixed once and for all, one can define a bilinear form on $R(P_k)$. This is done by using the previous calculations, however replacing the Euclidean dimension $m$ by the super-dimension $M$ (see also remark \ref{remarksuperdim}) and leads to the following

\begin{definition}
Let $2 \beta = M+2k$, then the bilinear form $<,>$ on $R(P_k)$ is defined by
\[
\begin{array}{lll}
<x^{2s} P_k,x^{2t} P_k> &=&  (-1)^{s+t} \frac{1}{2} \Gamma(s+t+\beta)\\
<x^{2s+1} P_k,x^{2t} P_k> &=&  0\\
<x^{2s} P_k,x^{2t+1} P_k> &=&  0\\
<x^{2s+1} P_k,x^{2t+1} P_k> &=&  (-1)^{s+t} \frac{1}{2} \Gamma(s+t+\beta +1)\\
\end{array}
\]

extended by linearity to the whole of $R(P_k)$.
\end{definition}

Note that this bilinear form is symmetric, but in general not positive definite (this is only the case if $M \in \mN$, $M>0$). Furthermore it is not defined if and only if $M \in -2 \mN$, due to the singularities of the Gamma-function.

We now introduce the following operator

\[
D_{+} = \px+2x
\]

which satisfies $D_+ (R(P_k)) \subset R(P_k)$ because of Corollary \ref{basicrel}. Now we have the following important property of $<,>$.

\begin{proposition}
The operators $\px$ and $D_{+}$ are dual with respect to $<,>$, i.e.

\[
<D_+ p_i P_k, p_j P_k> = < p_i P_k, \px p_j P_k>,
\]
with $p_i P_k$, $p_j P_k \in R(P_k)$, where $p_i$ and $p_j$ are polynomials in the vector variable $x$.
\label{dualityherm}
\end{proposition}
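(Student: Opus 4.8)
The plan is to exploit bilinearity to reduce the claim to monomial generators, and then to verify the identity case-by-case according to the parity of the exponents, the crucial input being the functional equation of the Gamma-function.

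First, since both sides are bilinear in their two arguments and $R(P_k)$ is spanned by the elements $x^a P_k$, $a \in \mN$, it suffices to establish
\[
<D_+ x^a P_k, x^b P_k> = <x^a P_k, \px x^b P_k>
\]
for all $a,b \in \mN$. Using $D_+ = \px+2x$ together with Corollary \ref{basicrel}, I can rewrite both sides entirely in terms of the four defining values of $<,>$. Concretely, for even $a=2s$ one has $D_+ x^{2s} P_k = 2s\,x^{2s-1}P_k + 2x^{2s+1}P_k$, while for odd $a=2s+1$ one has $D_+ x^{2s+1}P_k = (2\beta+2s)x^{2s}P_k + 2x^{2s+2}P_k$; the expression $\px x^b P_k$ is read off directly from the corollary.

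Next I would split into the four parity combinations of $(a,b)$. Because the bilinear form vanishes whenever the two exponents have opposite parity, the cases with $a,b$ both even and both odd are immediate: every term appearing on each side then pairs an even power against an odd power, so both sides are zero. The substantive content lies in the two mixed cases. For $a$ even and $b$ odd, the left-hand side expands to a combination of $\Gamma(s+t+\beta)$ and $\Gamma(s+t+\beta+1)$, whereas the right-hand side produces a single multiple of $\Gamma(s+t+\beta)$; collapsing $\Gamma(s+t+\beta+1) = (s+t+\beta)\Gamma(s+t+\beta)$ makes the two coincide, both equalling $(-1)^{s+t}(t+\beta)\Gamma(s+t+\beta)$. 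The case $a$ odd, $b$ even is entirely analogous and yields $-t\,(-1)^{s+t}\Gamma(s+t+\beta)$ on both sides.

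The main obstacle -- really the only point where anything beyond bookkeeping occurs -- is the matching of the Gamma-function values in the mixed-parity cases. This is, however, exactly what the Gamma-based definition of $<,>$ was engineered to deliver: the recurrence $\Gamma(z+1)=z\Gamma(z)$ converts the shift in degree produced by $\px$ and by multiplication by $x$ into the arithmetic factors $(t+\beta)$ and $(-t)$ that appear on each side. Since $2\beta = M+2k$, these coefficients encode precisely the super-dimension $M$, which is the structural reason the argument runs uniformly in $M$; it requires only $M \not\in -2\mN$, so that all Gamma-values involved are finite.
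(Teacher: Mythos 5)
Your proposal is correct and follows essentially the same route as the paper: reduce to monomials $x^aP_k$ by bilinearity, observe that the same-parity cases vanish identically because $D_+$ and $\px$ flip parity, and settle the two mixed-parity cases by a direct computation using $\Gamma(z+1)=z\Gamma(z)$ and $2\beta=M+2k$ (your intermediate values $(-1)^{s+t}(t+\beta)\Gamma(s+t+\beta)$ and $-t(-1)^{s+t}\Gamma(s+t+\beta)$ agree with the paper's, up to the overall factor $\tfrac12$ in the definition of $\langle\,,\rangle$). No gaps.
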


\begin{proof}
In \cite{MR1169463} the similar proposition in the standard Clifford analysis case is proven by using Stokes's theorem in $\mR^m$. In our case we need a different approach.

We have that

\[
<D_+ x^{2s} P_k,x^{2t} P_k> \; = \; 0 \; = \; <x^{2s} P_k, \px x^{2t} P_k>
\]

and
\begin{eqnarray*}
<D_+ x^{2s+1} P_k,x^{2t} P_k> &=&(2k+2s+M) <x^{2s} P_k,x^{2t} P_k> + 2 <x^{2s+2} P_k,x^{2t} P_k>\\
&=&(2k+2s+M) (-1)^{s+t}  \frac{1}{2} \Gamma(s+t+\beta )\\
&& + 2 (-1)^{s+t+1} \frac{1}{2} \Gamma(s+t+\beta + 1)\\
&=&(-1)^{s+t} \frac{1}{2} \Gamma(s+t+\beta ) \left(2k+2s+M - 2(s+t+\beta)\right)\\
&=&-2 t (-1)^{s+t} \frac{1}{2} \Gamma(s+t+\beta )\\
&=& <x^{2s+1} P_k,2t x^{2t-1} P_k>\\
&=&<x^{2s+1} P_k,\px x^{2t} P_k>.
\end{eqnarray*}

The expression $<D_+ x^{2s} P_k,x^{2t+1} P_k>$ is calculated in the same way.
\end{proof}

Now we arrive at the definition of the Clifford-Hermite polynomials in superspace.
\begin{definition}
Let $P_k$ be a spherical monogenic of degree $k$. Then

\[
H_{t,M}(P_k)(x) = (D_+)^t P_k
\]

is a Clifford-Hermite polynomial of degree $(t,k)$.
\end{definition}

We have that, by Corollary \ref{basicrel}, $H_{t,M}(P_k)(x) = H_{t,M,k}(x)P_k$, where $H_{t,M,k}(x)$ is a polynomial in the vector variable $x$, which does not depend on the specific choice of $P_k$, but only on the integer $k$. So clearly $H_{t,M}(P_k)(x) \in R(P_k)$.

The first few Clifford-Hermite polynomials have the following general form:

\begin{eqnarray*}
H_{0,M}(P_k)(x) &=&P_k\\
H_{1,M}(P_k)(x) &=&2 x P_k\\
H_{2,M}(P_k)(x) &=& [4 x^2 + 2(2k+M)]P_k\\
H_{3,M}(P_k)(x) &=&[8 x^3 + 4(2k+M+2)x]P_k\\
H_{4,M}(P_k)(x) &=&[16 x^4 + 16 (2k+M+2)x^2+4(2k+M+2)(2k+M)]P_k.
\end{eqnarray*}

Now we derive the basic properties of these new polynomials. We first have the following straightforward recursion formula:

\begin{theorem}[Recursion formula]

\[
H_{t,M}(P_k)(x) = D_+ H_{t-1,M}(P_k)(x).
\]
\end{theorem}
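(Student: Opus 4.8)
The plan is simply to unwind the recursive definition, since the Clifford-Hermite polynomials were \emph{defined} as iterated applications of the operator $D_{+} = \px + 2x$. By definition we have $H_{t,M}(P_k)(x) = (D_+)^t P_k$, so for any $t \geq 1$ one writes $(D_+)^t P_k = D_+ \left( (D_+)^{t-1} P_k \right)$, and recognizes the inner factor as $H_{t-1,M}(P_k)(x)$. This yields the claimed identity directly.

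Concretely, the single key step is the associativity of operator composition: $(D_+)^t = D_+ \circ (D_+)^{t-1}$. First I would recall that $H_{t-1,M}(P_k)(x) = (D_+)^{t-1} P_k$ by definition. Then I would apply $D_+$ to both sides, obtaining $D_+ H_{t-1,M}(P_k)(x) = D_+ (D_+)^{t-1} P_k = (D_+)^t P_k$. Finally, I would identify the right-hand side with $H_{t,M}(P_k)(x)$, again by definition, completing the argument.

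The only point requiring a brief justification is well-definedness, namely that every expression appearing stays inside the space $R(P_k)$ on which the subsequent theory operates. This is guaranteed by the inclusion $D_+ (R(P_k)) \subset R(P_k)$, which was already established in the text as a consequence of Corollary \ref{basicrel}. With this in hand, each $H_{t,M}(P_k)(x)$ is a genuine element of $R(P_k)$ and the repeated application of $D_+$ is legitimate for all $t$.

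I do not anticipate any real obstacle here: as the preceding remark in the text indicates, this recursion is \emph{straightforward}, being nothing more than a restatement of the defining formula $H_{t,M}(P_k)(x) = (D_+)^t P_k$. The substance of the Clifford-Hermite theory lies not in this identity but in the explicit polynomial forms, the orthogonality with respect to $<,>$, and the differential equations these polynomials satisfy; the present recursion serves mainly as a convenient bookkeeping tool for deriving those deeper properties.
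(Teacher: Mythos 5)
Your proof is correct and matches the paper's intent exactly: the paper states this recursion as ``straightforward'' with no written proof, precisely because it is an immediate unwinding of the definition $H_{t,M}(P_k)(x) = (D_+)^t P_k$, which is what you do. The remark on well-definedness via $D_+(R(P_k)) \subset R(P_k)$ is a nice touch but nothing further is needed.
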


The Clifford-Hermite polynomials are orthogonal with respect to $<,>$, as is expressed in the following theorem.

\begin{theorem}[Orthogonality relation]
If $s \neq t$ then
\[
<H_{s,M}(P_k)(x),H_{t,M}(P_k)(x)> = 0.
\]
\end{theorem}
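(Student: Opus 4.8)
The plan is to exploit the duality between $\px$ and $D_+$ established in Proposition~\ref{dualityherm}, together with the fact that $\px$ strictly lowers the degree in the vector variable $x$. Since the bilinear form $<,>$ is symmetric, I may assume without loss of generality that $s > t$. Writing $H_{s,M}(P_k)(x) = (D_+)^s P_k$, I would then apply Proposition~\ref{dualityherm} repeatedly to transfer each of the $s$ factors $D_+$ from the left-hand argument into the right-hand one, where it turns into a factor $\px$. This is legitimate at every step because both $D_+$ and $\px$ map $R(P_k)$ into itself (the former as remarked just before the definition of $D_+$, the latter by Corollary~\ref{basicrel}), so each intermediate expression is again of the form (polynomial in $x$) times $P_k$, to which the proposition applies.

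After $s$ iterations this yields
\[
<H_{s,M}(P_k)(x), H_{t,M}(P_k)(x)> \; = \; <P_k, (\px)^s (D_+)^t P_k> \; = \; <P_k, (\px)^s H_{t,M}(P_k)(x)>.
\]
Now $H_{t,M}(P_k)(x) = H_{t,M,k}(x) P_k$, where $H_{t,M,k}(x)$ is a polynomial of degree exactly $t$ in $x$ (its leading term is $2^t x^t$, as is clear from the recursion formula and the explicit low-degree examples). By Corollary~\ref{basicrel}, each application of $\px$ sends $x^j P_k$ to a scalar multiple of $x^{j-1} P_k$, and therefore strictly lowers the $x$-degree by one. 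Consequently $(\px)^s$ annihilates any polynomial of $x$-degree at most $t$ as soon as $s > t$, so the right-hand side equals $<P_k, 0> = 0$, which proves the claim.

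I expect the only point requiring genuine care to be the bookkeeping in the iterated application of the duality, that is, checking that every intermediate expression still lies in $R(P_k)$ so that Proposition~\ref{dualityherm} is applicable at each stage. The degree-lowering property of $\px$ is then an immediate consequence of Corollary~\ref{basicrel} and poses no real obstacle.
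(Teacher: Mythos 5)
Your proposal is correct and follows essentially the same route as the paper's own proof: assume $s>t$, use Proposition \ref{dualityherm} repeatedly to convert $(D_+)^s$ on the left into $(\px)^s$ on the right, and conclude via Corollary \ref{basicrel} that $(\px)^s$ annihilates the degree-$t$ polynomial $H_{t,M}(P_k)(x)$. The only difference is that you spell out the bookkeeping (stability of $R(P_k)$ under both operators and the degree-lowering argument) which the paper leaves implicit.
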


\begin{proof}
Suppose $s>t$. Then 
\begin{eqnarray*}
<H_{s,M}(P_k)(x),H_{t,M}(P_k)(x)>&=& < D_+^s P_k,H_{t,M}(P_k)(x)>\\
&=&  <  P_k, \px^s H_{t,M}(P_k)(x)>\\
&=&0,
\end{eqnarray*}
by Proposition \ref{dualityherm} and Corollary \ref{basicrel}.
\end{proof}

\begin{lemma}
The functions $H_{j,M} (P_k)(x)$, $j= 0,1,2, \ldots$ constitute a basis for $R(P_k)$.
\end{lemma}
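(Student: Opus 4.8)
The plan is to exhibit $\{H_{j,M}(P_k)(x)\}_{j\geq 0}$ as the image of the obvious spanning family $\{x^j P_k\}_{j\geq 0}$ under a triangular, invertible change of coordinates, and then to verify that the latter family is linearly independent. Since $R(P_k)$ is by definition the real linear span of the elements $x^j P_k$, these two facts together give the result.

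First I would record the triangular structure. Writing $D_{+}=\px+2x$ and using Corollary \ref{basicrel}, one sees that $D_{+}$ sends $x^j P_k$ to $2x^{j+1}P_k$ plus a scalar multiple of $x^{j-1}P_k$: the term $2x$ raises the power of $x$ by one, while $\px$ lowers it by one. An immediate induction on $t$ then yields
\[
H_{t,M}(P_k)(x) = (D_{+})^t P_k = 2^t\, x^t P_k + \sum_{j=0}^{t-1} c_{t,j}\, x^j P_k
\]
for suitable real constants $c_{t,j}$ (only the terms with $j\equiv t \bmod 2$ survive, as the explicit low-degree list already shows). The leading coefficient $2^t$ is nonzero, so for each $N$ the matrix expressing $H_{0,M}(P_k),\ldots,H_{N,M}(P_k)$ in terms of $P_k, xP_k,\ldots, x^N P_k$ is upper triangular with nonzero diagonal, hence invertible.

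The main point is then the linear independence of $\{x^j P_k\}_{j\geq 0}$. Here I would use the Euler grading: since $\mE(x^j P_k)=(k+j)\,x^j P_k$, the element $x^j P_k$ lies in the graded component $\cP_{k+j}$, and distinct $j$ give distinct summands of the direct sum $\cP=\bigoplus_l \cP_l$. It therefore remains only to check that $x^j P_k\neq 0$ for every $j$ (with $P_k\neq 0$). This follows by iterating Corollary \ref{basicrel}: applying $\px$ a total of $j$ times to $x^j P_k$ produces a constant multiple of $P_k$ whose coefficient is a product of factors of the form $2s$ and $2k+M+2s$, all nonzero because we work in the case $M\notin -2\mN$ (outside of which the bilinear form is not even defined). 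Hence $x^j P_k\neq 0$, and independence of the graded pieces forces $\{x^j P_k\}$ to be linearly independent, so it is a basis of $R(P_k)$.

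Combining the two steps completes the argument: the invertible triangular change of coordinates carries the basis $\{x^j P_k\}$ onto $\{H_{j,M}(P_k)(x)\}$, which is therefore also a basis of $R(P_k)$. The only genuinely delicate point is the non-vanishing $x^j P_k\neq 0$; everything else is formal linear algebra. An alternative route to independence would invoke the Fischer decomposition $\cP_{k+j}=\bigoplus_i x^i\cM_{k+j-i}$, whose summand $x^j\cM_k$ already encodes the injectivity of multiplication by $x^j$ on $\cM_k$, but the direct computation with $\px$ is more self-contained.
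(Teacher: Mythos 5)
Your proof is correct, and its core step --- the triangular change of coordinates from $\{x^jP_k\}$ to $\{H_{j,M}(P_k)(x)\}$ with nonzero leading coefficient $2^t$ --- is exactly the paper's one-line argument (``the coefficient of $H_{j,M}(P_k)(x)$ in $x^j$ is always different from zero''). You additionally verify that $\{x^jP_k\}_{j\ge 0}$ is itself linearly independent, via the Euler grading together with the non-vanishing $x^jP_k\neq 0$ for $M\notin-2\mN$; the paper leaves this point implicit, so this is a welcome completion rather than a different route.
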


\begin{proof}
It suffices to note that the coefficient of $H_{j,M} (P_k)(x)$ in $x^j$ is always different from zero.
\end{proof}

The Clifford-Hermite polynomials are solutions of a partial differential equation in superspace. This equation is given in the following

\begin{theorem}[Differential equation]
$H_{t,M}(P_k)(x)$ is a solution of the following differential equation

\[
\px^2 H_{t,M}(P_k)(x)+ 2 x \px H_{t,M}(P_k)(x) - C(t,M,k) H_{t,M}(P_k)(x) = 0
\]

with 

\[
C(t,M,k) = \left\{ \begin{array}{l} 
2t, \quad \mbox{$t$ even}\\
2(t+M+2k-1), \quad \mbox{$t$ odd.}
\end{array}
\right.
\]
\end{theorem}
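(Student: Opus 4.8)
The plan is to recognise the left-hand side of the equation as a single composite operator and then to run an induction on $t$ driven by the recursion $H_{t,M}(P_k)(x) = D_+ H_{t-1,M}(P_k)(x)$. The first observation is the factorisation
\[
\px^2 + 2x\px = (\px + 2x)\px = D_+ \px ,
\]
so that the asserted differential equation is exactly the eigenvalue statement $D_+\px\, H_{t,M}(P_k)(x) = C(t,M,k)\, H_{t,M}(P_k)(x)$. Thus it suffices to show that each $H_{t,M}(P_k)(x)$ is an eigenvector of $D_+\px$ with the prescribed eigenvalue.

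Next I would record how the commutator $[\px,x] = \px x - x\px$ acts on $R(P_k)$. Evaluating $\px(x\cdot x^{j}P_k)$ and $x\px(x^{j}P_k)$ on the basis $\{x^{j}P_k\}$ by means of Corollary \ref{basicrel}, a two-line computation gives that $[\px,x]$ is multiplication by the scalar $2k+M$ on the even elements $x^{2s}P_k$ and by $2-(2k+M)$ on the odd elements $x^{2s+1}P_k$. Since $D_+ = \px+2x$ interchanges even- and odd-degree elements and $H_{0,M}(P_k)(x)=P_k$ is even, the polynomial $H_{t,M}(P_k)(x)$ is a combination of terms $x^{j}P_k$ with $j\equiv t \pmod 2$ only; hence $[\px,x]$ acts on $H_{t,M}(P_k)(x)$ as a single scalar $\lambda_t$, equal to $2k+M$ when $t$ is even and to $2-(2k+M)$ when $t$ is odd.

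For the inductive step I would use the elementary identity $\px D_+ = D_+\px + 2[\px,x]$ (which is just $[\px,D_+]=2[\px,x]$). Assuming $D_+\px\,H_{t-1,M}(P_k)(x) = C(t-1,M,k)\,H_{t-1,M}(P_k)(x)$, I would first compute
\[
\px H_{t,M}(P_k)(x) = \px D_+ H_{t-1,M}(P_k)(x) = \bigl(D_+\px + 2[\px,x]\bigr)H_{t-1,M}(P_k)(x) = \bigl(C(t-1,M,k)+2\lambda_{t-1}\bigr)H_{t-1,M}(P_k)(x),
\]
using the induction hypothesis together with $[\px,x]H_{t-1,M}(P_k)(x)=\lambda_{t-1}H_{t-1,M}(P_k)(x)$. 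Applying $D_+$ once more, and using that a scalar passes through $D_+$, yields
\[
D_+\px\,H_{t,M}(P_k)(x) = \bigl(C(t-1,M,k)+2\lambda_{t-1}\bigr)D_+H_{t-1,M}(P_k)(x) = \bigl(C(t-1,M,k)+2\lambda_{t-1}\bigr)H_{t,M}(P_k)(x).
\]
It then remains only to verify the numerical identity $C(t,M,k)=C(t-1,M,k)+2\lambda_{t-1}$, separately for $t$ even and $t$ odd, together with the base cases $t=0$ (where $\px P_k=0$ gives eigenvalue $0=C(0,M,k)$) and $t=1$.

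All the computations are short, and the genuine point of care — the main place an error could creep in — is the bookkeeping of the two parity branches: one must pair the value of $\lambda_{t-1}$, which is governed by the parity of $t-1$, with the correct branch of $C(t,M,k)$, noting that $\lambda$ and $C$ switch branches in opposite phase. Checking both branches closes the induction and proves the differential equation.
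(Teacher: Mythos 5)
Your proof is correct, and it reaches the theorem by a genuinely different route than the paper's. Both arguments ultimately rest on the same intermediate fact --- the lowering relations $\px H_{t,M}(P_k) = c_t\, H_{t-1,M}(P_k)$ with $c_{2t}=4t$ and $c_{2t+1}=2(2t+2k+M)$, which is formula (\ref{pdehermite}) in the paper, after which a single application of $D_+$ gives the differential equation. The difference lies in how the lowering relations are obtained. You prove them by an operator-level induction driven by the identity $\px D_+ = D_+\px + 2[\px,x]$ together with the observation (a direct consequence of Corollary \ref{basicrel}) that $[\px,x]$ acts as the scalar $2k+M$ on even elements $x^{2s}P_k$ of $R(P_k)$ and as $2-(2k+M)$ on odd ones, while $H_{t,M}(P_k)$ has pure parity $t$ because $D_+$ switches parity; the induction then closes via the numerical identity $C(t,M,k)=C(t-1,M,k)+2\lambda_{t-1}$, which indeed holds in both parity branches (and the single base case $t=0$ already suffices, since the step from $t=0$ to $t=1$ is covered by the general argument). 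The paper instead expands $H_{t,M}(P_k)$ in the basis $x^jP_k$, derives recursions for the coefficients $a^i_j$, and carries out the induction coefficientwise. Your version avoids all coefficient bookkeeping, involves no division, and so is equally valid when $M\in-2\mN$ (the case for which the paper says induction is needed); it is also closer in spirit to the abstract axiomatization sketched in the paper's final section. What the coefficientwise computation buys the authors is reuse: the recursions set up in their proof are exactly the input for the explicit formulae of Theorem \ref{explformhermite}, which your argument does not produce as a by-product.
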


\begin{proof}
This theorem can be proven by induction. This is necessary in case $M \in -2\mN$. In the other cases it is also possible to use the method described in \cite{MR1169463}.

We write the following expansion of the Clifford-Hermite polynomials

\[
\begin{array}{lll}
H_{2t,M}(P_k) &=& \sum_{i=0}^t a_{2i}^{2t} x^{2i} P_k\\
H_{2t+1,M}(P_k) &=& \sum_{i=0}^t a_{2i+1}^{2t+1} x^{2i+1} P_k.\\
\end{array}
\]

The recursion formula combined with Corollary \ref{basicrel} leads to the following relation among the coefficients

\[
\begin{array}{lll}
a_{2i}^{2t} &=& (2i+2k+M) a_{2i+1}^{2t-1} + 2 a_{2i-1}^{2t-1} \\
a_{2i+1}^{2t+1} &=& (2i+2)a_{2i+2}^{2t} + 2a_{2i}^{2t}.
\end{array}
\]

We need to prove the following (which one can easily see to be true if $t=0$)

\begin{equation}
\begin{array}{lll}
\px H_{2t,M}(P_k) &=&4t H_{2t-1,M}(P_k)\\
\px H_{2t+1,M}(P_k) &=& 2(2t+2k+M) H_{2t,M}(P_k).\\
\end{array}
\label{pdehermite}
\end{equation}

or, in terms of the $a_j^i$

\[
\begin{array}{lll}
2 i a_{2i}^{2t} &=& 4t a_{2i-1}^{2t-1}\\
(2k+2i+M) a_{2i+1}^{2t+1} &=& 2(2t+2k+M)a_{2i}^{2t}.
\end{array}
\]

Indeed, letting act $D_+$ on (\ref{pdehermite}) then yields the theorem.

Suppose now that formula (\ref{pdehermite}) holds for $H_{t,M}(P_k)(x)$, $t \leq 2s$. We show that it also holds for $t=2s+1$. Indeed

\begin{eqnarray*}
(2k+2i+M) a_{2i+1}^{2s+1} &=& (2k+2i+M) ( (2i+2)a_{2i+2}^{2s} + 2a_{2i}^{2s} )\\
&=&(2k+2i+M) ( 4s a_{2i+1}^{2s-1} + 2a_{2i}^{2s} )\\
&=&4s a_{2i}^{2s} - 8 s a_{2i-1}^{2s-1} + 2(2k+2i+M) a_{2i}^{2s}\\
&=&2(2s+2k+M) a_{2i}^{2s} + 4i a_{2i}^{2s} - 8 s a_{2i-1}^{2s-1}\\
&=& 2(2s+2k+M) a_{2i}^{2s}.
\end{eqnarray*}

Similarly we can prove that if the theorem holds for $t \leq 2s+1$, then it also holds for $t=2s+2$.
\end{proof}

The previous proof can be used to give explicit formulae for the coefficients $a_j^i$ in the expansion of the Hermite polynomials. This yields the following 

\begin{theorem}[Explicit form]
If $M \not \in -2 \mN$, then the coefficients in the expansion of the Clifford-Hermite polynomials take the following form
\begin{eqnarray*}
a_{2i}^{2t} &=& 2^{2t}\left( \begin{array}{l}t\\i \end{array} \right)\frac{\Gamma(t+k+M/2)}{\Gamma(i+k+M/2)}\\
a_{2i+1}^{2t+1} &=& 2^{2t+1}\left( \begin{array}{l}t\\i \end{array} \right)\frac{\Gamma(t+1+k+M/2)}{\Gamma(i+1+k+M/2)}.
\end{eqnarray*}
\label{explformhermite}
\end{theorem}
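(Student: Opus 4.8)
The plan is to prove the two closed forms by induction on the degree, showing that they satisfy the coefficient recursions already extracted in the proof of the differential equation together with the initial data $H_{0,M}(P_k)=P_k$ and $H_{1,M}(P_k)=2xP_k$. First I would abbreviate $\beta = k+M/2$, so that $2\beta = M+2k$ and the claims read $a_{2i}^{2t} = 2^{2t}\binom{t}{i}\Gamma(t+\beta)/\Gamma(i+\beta)$ and $a_{2i+1}^{2t+1} = 2^{2t+1}\binom{t}{i}\Gamma(t+1+\beta)/\Gamma(i+1+\beta)$. A preliminary remark records where the hypothesis $M \not\in -2\mN$ is used: it guarantees that $i+\beta = i+k+M/2$ is never a non-positive integer for $i\geq 0$ (it is either non-integral or $\geq 1$), so every Gamma-value above is finite and non-zero and the functional equation $\Gamma(z+1)=z\Gamma(z)$ may be applied freely. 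Without this the coefficients need not even be well defined, which is precisely why the case $M\in-2\mN$ is excluded.

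The proof of the differential equation already produced, via $D_+ = \px+2x$ and Corollary \ref{basicrel}, the two recursions $a_{2i}^{2t} = (2i+2k+M)a_{2i+1}^{2t-1} + 2a_{2i-1}^{2t-1}$ and $a_{2i+1}^{2t+1} = (2i+2)a_{2i+2}^{2t} + 2a_{2i}^{2t}$. With the convention $a_j^i=0$ whenever $j<0$ or $j>i$, these two relations, started from $a_0^0=1$ and $a_1^1=2$, determine every coefficient uniquely by induction on the degree $d$ of $H_{d,M}(P_k)$: an even layer $d=2t$ is produced from the odd layer $2t-1$ by the first recursion, and an odd layer $d=2t+1$ from the even layer $2t$ by the second. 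Hence it suffices to verify that the claimed closed forms satisfy both recursions and the base case, and the induction then closes. (The formula itself is found by iterating these relations, but verification is cleaner.)

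The verification is purely algebraic. I would substitute the closed forms into each recursion and pull out the common powers of $2$ and Gamma-factors. For the even recursion the right-hand side reduces, after $(i+\beta)/\Gamma(i+1+\beta)=1/\Gamma(i+\beta)$, to the factor $2^{2t}\Gamma(t+\beta)/\Gamma(i+\beta)$ times $\binom{t-1}{i}+\binom{t-1}{i-1}$, which equals $\binom{t}{i}$ by Pascal's rule. For the odd recursion it reduces, using $(i+1)\binom{t}{i+1}=(t-i)\binom{t}{i}$ and again $1/\Gamma(i+\beta)=(i+\beta)/\Gamma(i+1+\beta)$, to the factor $2^{2t+1}\binom{t}{i}\Gamma(t+\beta)/\Gamma(i+1+\beta)$ times the trivial identity $(t-i)+(i+\beta)=t+\beta$, whence $\Gamma(t+\beta)(t+\beta)=\Gamma(t+1+\beta)$ gives exactly $a_{2i+1}^{2t+1}$. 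Both reductions hold identically in $i$.

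The step I expect to be the main obstacle is not conceptual but bookkeeping: keeping the index ranges and the vanishing convention consistent so that the boundary values $i=0$ (the constant terms) and $i=t$ (the leading terms) are handled by the same computation. The genuine subtlety is that the constant terms $a_0^{2t}$ are \emph{not} pinned down by the derivative relations (\ref{pdehermite}) — there the factor $2i$ kills the left-hand side at $i=0$ — but are supplied by the first $D_+$-recursion at $i=0$, which reads $a_0^{2t}=(2k+M)a_1^{2t-1}=2\beta\,a_1^{2t-1}$. For this reason I would run the induction off the $D_+$-recursions rather than off the relations $\px H_{2t,M}(P_k)=4t\,H_{2t-1,M}(P_k)$, so that every coefficient, including the constant terms, is determined at each step.
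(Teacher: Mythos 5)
Your proof is correct, but it runs in the opposite direction from the paper's. The paper \emph{derives} the closed form by telescoping: it combines the relation $2i\,a_{2i}^{2t}=4t\,a_{2i-1}^{2t-1}$ (which is the content of the differential-equation theorem, i.e.\ of (\ref{pdehermite})) with $a_{2i+1}^{2t+1}=2\frac{2t+2k+M}{2i+2k+M}a_{2i}^{2t}$ to step from $a_{2i}^{2t}$ down to $a_{0}^{2t-2i}$, and then computes $a_0^{2t}$ separately by iterating $a_0^{2t}=(2k+M)a_1^{2t-1}=4(t+k-1+M/2)a_0^{2t-2}$. You instead \emph{verify} the guessed closed form against the two $D_+$-recursions $a_{2i}^{2t}=(2i+2k+M)a_{2i+1}^{2t-1}+2a_{2i-1}^{2t-1}$ and $a_{2i+1}^{2t+1}=(2i+2)a_{2i+2}^{2t}+2a_{2i}^{2t}$, which follow directly from $H_{t,M}=D_+H_{t-1,M}$ and Corollary~\ref{basicrel}, and appeal to uniqueness of the solution of that recursion with the stated boundary convention; the algebra (Pascal's rule for the even step, $(t-i)+(i+\beta)=t+\beta$ for the odd step) checks out. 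What your route buys is logical independence from the differential-equation theorem --- the paper's proof silently relies on the relations proved there, whereas yours needs only the raw recursion coming from the definition --- together with a clean treatment of the constant terms, whose indeterminacy under the $\partial_x$-relations you correctly identify as the one genuine pitfall. What the paper's route buys is that it exhibits how the formula is found rather than presupposing it. Both are complete proofs of Theorem~\ref{explformhermite}.
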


\begin{proof}
We first prove the formula for $a_{2i}^{2t}$. We have that, using the expressions from the previous proof

\begin{eqnarray*}
a_{2i}^{2t} &=& \frac{2t}{i}a_{2i-1}^{2t-1}\\
&=& \frac{4t(t+k-1+M/2)}{i(i+k-1+M/2)}a_{2i-2}^{2t-2}\\
&=& \ldots\\
&=&2^{2i}\frac{t\ldots (t-i+1)}{i(i-1)\ldots 1}\frac{(t+k-1+M/2)\ldots (t+k-i+M/2)}{(i+k-1+M/2) \ldots (k+M/2)}a_{0}^{2t-2i}\\
&=&2^{2i} \left( \begin{array}{l}t\\i \end{array} \right)\frac{\Gamma(t+k+M/2)\Gamma(k+M/2)}{\Gamma(t+k-i+M/2)\Gamma(i+k+M/2)}a_{0}^{2t-2i}.
\end{eqnarray*}

So we need a formula for $a_{0}^{2t}$. This can be done as follows

\begin{eqnarray*}
a_{0}^{2t} &=& (2k+M) a_{1}^{2t-1}\\
&=&(2k+M)2 \frac{2t+M+2k-2}{2k+M} a_{0}^{2t-2}\\
&=&4 (t+k-1+M/2) a_{0}^{2t-2}\\
&=&\ldots\\
&=& 2^{2t} (t+k-1+M/2)\ldots (k+M/2) a_{0}^{0}\\
&=&2^{2t} \frac{\Gamma(t+k+M/2)}{\Gamma(k+M/2)}.
\end{eqnarray*}

Combining these results gives the desired formula for $a_{2i}^{2t}$. The formula for $a_{2i+1}^{2t+1}$ follows from the observation that

\[
a_{2i+1}^{2t+1} = 2\frac{2t+2k+M}{2i+2k+M} a_{2i}^{2t}.
\]
\end{proof}

Now, using the results on the differential equation of the Clifford-Hermite polynomials, we can obtain a second recursion formula:

\begin{theorem}[Recursion formula bis]

\[
H_{t+1,M}(P_k) = 2x H_{t,M}(P_k)+ C(n,M,k) H_{t-1,M}(P_k).
\]
\end{theorem}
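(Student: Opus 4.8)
The plan is to derive this second recursion directly from the first recursion formula $H_{t+1,M}(P_k) = D_+ H_{t,M}(P_k)$ together with the action of $\px$ on the Clifford-Hermite polynomials that was already established inside the proof of the differential equation. First I would simply expand the operator $D_+ = \px + 2x$, so that
\[
H_{t+1,M}(P_k) = \px H_{t,M}(P_k) + 2x\, H_{t,M}(P_k).
\]
The term $2x\, H_{t,M}(P_k)$ already matches the first term on the right-hand side of the claimed identity, so everything reduces to identifying $\px H_{t,M}(P_k)$ with a constant multiple of $H_{t-1,M}(P_k)$.

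The key observation is that the two cases of formula (\ref{pdehermite}) from the differential-equation proof can be merged into the single statement
\[
\px H_{t,M}(P_k) = C(t,M,k)\, H_{t-1,M}(P_k),
\]
valid for all $t$. Indeed, for $t=2s$ even, formula (\ref{pdehermite}) gives $\px H_{2s,M}(P_k) = 4s\, H_{2s-1,M}(P_k)$, and $4s = 2t = C(t,M,k)$ in the even case; for $t=2s+1$ odd, it gives $\px H_{2s+1,M}(P_k) = 2(2s+2k+M)\, H_{2s,M}(P_k)$, and substituting $2s = t-1$ yields precisely the coefficient $2(t+M+2k-1) = C(t,M,k)$ of the odd case. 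Substituting this unified identity into the expanded recursion then gives the claim at once.

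I do not expect any genuine obstacle here: the analytic content has already been carried out in the differential-equation proof, so the present theorem is essentially a repackaging of formula (\ref{pdehermite}). The only point requiring a little care is the parity bookkeeping confirming that the constant produced by $\px H_{t,M}(P_k)$ coincides with $C(t,M,k)$ in both the even and odd cases; and I would note in passing that the coefficient in the statement should read $C(t,M,k)$ rather than $C(n,M,k)$.
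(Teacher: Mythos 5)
Your proposal is correct and follows essentially the same route as the paper: expand $D_+ = \px + 2x$ in the first recursion formula and identify $\px H_{t,M}(P_k) = C(t,M,k)\,H_{t-1,M}(P_k)$ from formula (\ref{pdehermite}); your explicit parity check of the constant is exactly the detail the paper leaves implicit. You are also right that $C(n,M,k)$ in the statement is a typo for $C(t,M,k)$.
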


\begin{proof}
\begin{eqnarray*}
H_{t+1,M}(P_k) &=& D_+ H_{t,M}(P_k)\\
&=& (\px +2x) H_{t,M}(P_k)\\
&=& 2x H_{t,M}(P_k) + C(t,M,k) H_{t-1,M}(P_k).
\end{eqnarray*}
\end{proof}

One can also formally introduce a Rodrigues formula in superspace. First we define the generalized Gaussian function

\[
\exp(x^2) = \sum_{k=0}^{\infty} \frac{1}{k!} x^{2k}
\]

which we will manipulate symbolically. We then have the following theorem.

\begin{theorem}[Rodrigues formula] The Clifford-Hermite polynomials take the form

\[
H_{t,M}(P_k)(x) = \exp(-x^2) (\px)^t  \exp(x^2) P_k.
\]
\end{theorem}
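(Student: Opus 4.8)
The plan is to show that the symbolic expression $\exp(-x^2)(\px)^t\exp(x^2)P_k$ satisfies the same recursion as the Clifford-Hermite polynomials, namely that applying it at level $t$ equals $D_+$ applied to the expression at level $t-1$. Since $H_{0,M}(P_k)(x)=P_k$ and $\exp(-x^2)(\px)^0\exp(x^2)P_k=\exp(-x^2)\exp(x^2)P_k=P_k$, matching the base case, an induction on $t$ will then finish the proof. So the real content is a single commutation identity.

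\medskip
First I would establish the key symbolic identity
\[
\px\bigl(\exp(x^2)\,G\bigr)=\exp(x^2)\,(\px+2x)\,G=\exp(x^2)\,D_+\,G
\]
for any $G\in\cP$, where $\exp(x^2)=\sum_{k=0}^\infty \tfrac{1}{k!}x^{2k}$ is manipulated purely formally. The heart of this is the computation of $\px\exp(x^2)$. Because $x^2$ is scalar-valued, $\exp(x^2)$ commutes with everything, and one expects $\px\exp(x^2)=2x\exp(x^2)$ exactly as in the scalar case; this should follow term-by-term from Corollary \ref{basicrel} applied to $\px(x^{2k})=\px(x^{2k}\cdot 1)=2k\,x^{2k-1}$ (taking $P_0=1\in\cM_0$), giving $\px\exp(x^2)=\sum_k \tfrac{1}{k!}2k\,x^{2k-1}=2x\exp(x^2)$ after reindexing. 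Combining this with the (graded) Leibniz behaviour of $\px$ and the fact that $\px$ is scalar when acting through the scalar series yields the displayed identity.

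\medskip
With this commutation rule in hand, the induction step is immediate. Assume $H_{t-1,M}(P_k)(x)=\exp(-x^2)(\px)^{t-1}\exp(x^2)P_k$. Then
\begin{eqnarray*}
D_+H_{t-1,M}(P_k)(x)
&=&D_+\exp(-x^2)(\px)^{t-1}\exp(x^2)P_k\\
&=&\exp(-x^2)\,\px\,(\px)^{t-1}\exp(x^2)P_k\\
&=&\exp(-x^2)(\px)^t\exp(x^2)P_k,
\end{eqnarray*}
where the middle equality applies the identity in the form $D_+\bigl(\exp(-x^2)H\bigr)=\exp(-x^2)\px H$ (the dual version obtained by replacing $x^2$ with $-x^2$, which sends $\px+2x$ on the outside to $\px$ on the inside). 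By the recursion formula this equals $H_{t,M}(P_k)(x)$, closing the induction.

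\medskip
The main obstacle I anticipate is justifying the formal manipulation rigorously: one must check that $\px$ and multiplication by $\exp(\pm x^2)$ interact correctly even though $\exp(x^2)$ is an infinite series and $\px$ is a first-order operator that does not simply obey the ordinary product rule in the Clifford setting. The safe route is to verify the commutation identity degree-by-degree on the homogeneous components $x^{2k}P_k$ using Corollary \ref{basicrel}, so that no genuine analytic convergence is needed—everything reduces to the algebraic relations already proven. Once $\px\exp(x^2)=2x\exp(x^2)$ is confirmed symbolically, the rest is bookkeeping, and the scalar nature of $x^2$ removes any sign subtleties that might otherwise arise from the anti-commuting generators.
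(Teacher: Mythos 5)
Your proof is correct and rests on exactly the same idea as the paper's: the operator identity $\exp(-x^2)\,\px\,\exp(x^2)=D_+$ on $R(P_k)$ (equivalently $\px(\exp(x^2)G)=\exp(x^2)D_+G$), verified termwise via Corollary \ref{basicrel} and then iterated; the paper simply states this equality and concludes, while you spell out the degree-by-degree check and package the iteration as an induction. One small caution: the "Leibniz behaviour" remark is not literally valid for the Dirac operator, but your fallback of checking the identity on both $x^{2a}P_k$ and $x^{2a+1}P_k$ using Corollary \ref{basicrel} is the correct and sufficient justification.
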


\begin{proof}
This follows immediately from the following operator equality on $R(P_k)$:

\[
\exp(-x^2) \px  \exp(x^2) = D_+,
\]
combined with the definition of the Clifford-Hermite polynomials.
\end{proof}

Finally, as $H_{t,M}(P_k)(x) = H_{t,M,k}(x)P_k$ where $H_{t,M,k}(x)$ is a polynomial in the vector variable $x$, it is a natural question to ask whether these polynomials are related to special polynomials on the real line. This is indeed the case. More specifically we have the following

\begin{theorem}
One has that
\begin{eqnarray*}
H_{2t,M,k}(x) &=& 2^{2t} t! L_{t}^{\frac{M}{2} + k-1}(-x^2)\\
H_{2t+1,M,k}(x) &=& 2^{2t+1} t! x L_{t}^{\frac{M}{2} + k}(-x^2),
\end{eqnarray*}
where $L_{n}^{\alpha}$ are the generalized Laguerre polynomials on the real line.
\end{theorem}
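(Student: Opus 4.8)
The plan is to prove the statement by directly matching the coefficients of the power-series expansions on both sides, combining the explicit form of the Clifford-Hermite coefficients from Theorem \ref{explformhermite} with the standard series representation of the generalized Laguerre polynomials. Recall that
\[
L_n^{\alpha}(y) = \sum_{i=0}^{n} (-1)^i \binom{n+\alpha}{n-i} \frac{y^i}{i!},
\]
so that evaluating at $y=-x^2$ turns each term into a positive power of $x^2$, since the factor $(-x^2)^i=(-1)^i x^{2i}$ cancels the sign $(-1)^i$:
\[
L_t^{\alpha}(-x^2) = \sum_{i=0}^{t} \binom{t+\alpha}{t-i}\frac{x^{2i}}{i!}.
\]
This is precisely why the argument $-x^2$ occurs in the statement, and it reduces the whole theorem to a termwise identification.

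First I would treat the even case. Rewriting the generalized binomial coefficient through Gamma-functions, $\binom{t+\alpha}{t-i}=\Gamma(t+\alpha+1)/(\Gamma(t-i+1)\Gamma(\alpha+i+1))$, setting $\alpha=M/2+k-1$, and using $\Gamma(t-i+1)=(t-i)!$ together with $t!/(i!\,(t-i)!)=\binom{t}{i}$, the coefficient of $x^{2i}$ in $2^{2t}t!\,L_t^{M/2+k-1}(-x^2)$ collapses to
\[
2^{2t}\frac{t!}{i!}\binom{t+M/2+k-1}{t-i} = 2^{2t}\binom{t}{i}\frac{\Gamma(t+k+M/2)}{\Gamma(i+k+M/2)},
\]
which is exactly $a_{2i}^{2t}$ from Theorem \ref{explformhermite}. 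The odd case is entirely analogous: taking $\alpha=M/2+k$, the coefficient of $x^{2i+1}$ in $2^{2t+1}t!\,x\,L_t^{M/2+k}(-x^2)$ reduces by the same manipulation to $2^{2t+1}\binom{t}{i}\Gamma(t+1+k+M/2)/\Gamma(i+1+k+M/2)=a_{2i+1}^{2t+1}$. Hence the two expansions agree term by term.

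I do not expect a genuine obstacle: the core of the argument is bookkeeping of Gamma-functions and binomials, and the only point demanding care is the reading-off of $\Gamma(t-i+1)=(t-i)!$ and the recombination into $\binom{t}{i}$. The one subtlety worth flagging is that Theorem \ref{explformhermite} is stated only for $M\notin -2\mN$, whereas the coefficients $a_j^i$ are defined by their recursion for every value of $M$. This is harmless, however, because each ratio $\Gamma(t+k+M/2)/\Gamma(i+k+M/2)$ is a finite product of linear factors in $M$, hence a polynomial in $M$, and $L_t^{M/2+k-1}(-x^2)$ likewise depends polynomially on $M$. Two polynomials in $M$ that coincide on the infinite set $M\notin -2\mN$ coincide identically, so the stated Laguerre expression in fact represents $H_{2t,M,k}(x)$ (respectively $H_{2t+1,M,k}(x)$) for all admissible $M$.
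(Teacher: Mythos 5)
Your proof is correct and follows essentially the same route as the paper: the paper's own proof consists precisely of comparing the coefficients of Theorem \ref{explformhermite} with the series $L_{t}^{\alpha}(y) = \sum_{i=0}^t \frac{\Gamma(t +\alpha +1)}{i! (t-i)! \Gamma(i + \alpha +1)} (-y)^i$ evaluated at $y=-x^2$, which is exactly your termwise identification. Your closing remark on extending beyond $M\notin-2\mN$ by polynomiality in $M$ is a small extra care the paper does not spell out, but it does not change the substance of the argument.
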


\begin{proof}
This follows immediately by comparing the coefficients given in Theorem \ref{explformhermite} with the definition of the generalized Laguerre polynomials:

\[
L_{t}^{\alpha}(x) = \sum_{i=0}^t \frac{\Gamma(t +\alpha +1)}{i! (t-i)! \Gamma(i + \alpha +1)} (-x)^i.
\]
\end{proof}

Let us finally calculate the normalization constants $<H_{t,M}(P_k)(x) , H_{t,M}(P_k)(x) >$.

\begin{theorem}
One has that
\begin{eqnarray*}
<H_{2t,M}(P_k)(x) , H_{2t,M}(P_k)(x) > &=&\frac{1}{2} 4^{2t} t! \Gamma(t + M/2+k)\\
<H_{2t+1,M}(P_k)(x) , H_{2t+1,M}(P_k)(x) > &=&\frac{1}{2} 4^{2t+1} t! \Gamma(t + M/2+k+1).\\
\end{eqnarray*}
\end{theorem}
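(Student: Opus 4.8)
The plan is to abbreviate $N_t := \; <H_{t,M}(P_k)(x),H_{t,M}(P_k)(x)>$ and to reduce the computation to a simple two-step recursion in $t$, combining the duality of $\px$ and $D_+$ (Proposition \ref{dualityherm}) with the differentiation formulas (\ref{pdehermite}) established in the proof of the differential equation. The starting observation is that $H_{t,M}(P_k) = D_+ H_{t-1,M}(P_k)$, so that moving $D_+$ across the bilinear form turns it into $\px$ acting on the right-hand factor. Since every $H_{j,M}(P_k)$ lies in $R(P_k)$, Proposition \ref{dualityherm} applies verbatim.

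Carrying this out in the even case gives
\[
N_{2t} = \; <D_+ H_{2t-1,M}(P_k),H_{2t,M}(P_k)> = \; <H_{2t-1,M}(P_k),\px H_{2t,M}(P_k)>,
\]
and then invoking the first line of (\ref{pdehermite}), namely $\px H_{2t,M}(P_k) = 4t\, H_{2t-1,M}(P_k)$, yields the scalar recursion $N_{2t} = 4t\, N_{2t-1}$. The odd case proceeds identically: writing $H_{2t+1,M}(P_k) = D_+ H_{2t,M}(P_k)$, applying duality, and using the second line of (\ref{pdehermite}), $\px H_{2t+1,M}(P_k) = 2(2t+2k+M) H_{2t,M}(P_k)$, gives $N_{2t+1} = 2(2t+2k+M)\, N_{2t}$. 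The base case is read off directly from the definition of $<,>$: with $\beta = M/2+k$ one has $N_0 = \; <P_k,P_k> = \tfrac{1}{2}\Gamma(\beta) = \tfrac{1}{2}\Gamma(M/2+k)$, matching the claimed even formula at $t=0$.

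It then remains only to check that the two claimed closed forms satisfy these recursions, which is a routine induction using $x\,\Gamma(x) = \Gamma(x+1)$. For the even step one feeds $N_{2t-1} = \tfrac{1}{2}4^{2t-1}(t-1)!\,\Gamma(t+M/2+k)$ into $N_{2t} = 4t\,N_{2t-1}$ and recovers $\tfrac{1}{2}4^{2t}t!\,\Gamma(t+M/2+k)$; for the odd step one uses $2(2t+2k+M) = 4(t+k+M/2)$ together with the Gamma recurrence to absorb the factor $(t+k+M/2)$ into the Gamma function. I expect no serious obstacle here: the only points demanding care are the correct reduction of the even/odd index bookkeeping (so that the factor attached to each step is the one supplied by (\ref{pdehermite}) and not its sibling) and confirming the base value from the bilinear-form definition. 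A less efficient alternative, which I would mention but not pursue, is to expand $H_{2t,M}(P_k) = \sum_i a_{2i}^{2t} x^{2i}P_k$ using the explicit coefficients of Theorem \ref{explformhermite} and evaluate $<,>$ termwise; this leads to a hypergeometric-type sum that is harder to close than the clean recursion above.
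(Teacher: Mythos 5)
Your proposal is correct and takes essentially the same approach as the paper: both arguments combine the duality of Proposition \ref{dualityherm} with the formulas (\ref{pdehermite}) to obtain the scalar recursion $N_t = C(t,M,k)\,N_{t-1}$ and then iterate it down to the base value $N_0 = \; <P_k,P_k> \; = \tfrac{1}{2}\Gamma(\beta)$. The only difference is cosmetic: you introduce $D_+$ on the left factor via the recursion formula $H_{t,M}(P_k)=D_+H_{t-1,M}(P_k)$, whereas the paper inserts $\tfrac{1}{C(t,M,k)}D_+\px$ in front of $H_{t,M}(P_k)$ using the differential equation before applying the same duality step.
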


\begin{proof}
We only do the first one, the other one is similar. We have that
\begin{eqnarray*}
<H_{2t,M}(P_k)(x) , H_{2t,M}(P_k)(x) > &=& \frac{1}{C(2t,M,k)}< D_+ \px H_{2t,M}(P_k)(x) , H_{2t,M}(P_k)(x) >\\
&=& \frac{1}{C(2t,M,k)}< \px H_{2t,M}(P_k)(x) ,\px  H_{2t,M}(P_k)(x) >\\
&=& C(2t,M,k) <  H_{2t-1,M}(P_k)(x) , H_{2t-1,M}(P_k)(x) >\\
&=&\ldots\\
&=& C(2t,M,k) C(2t-1,M,k) \ldots C(1,M,k) <P_k, P_k>\\
&=& C(2t,M,k) C(2t-1,M,k)  \ldots C(1,M,k) \frac{1}{2} \Gamma(\beta).\\
\end{eqnarray*}
Replacing the coefficients $C(i,M,k)$ by their actual values gives the desired formula.
\end{proof}

\begin{remark}
The factor $2$ appearing in our definition of the operator $D_+=\px + 2x$ is a convention. This factor corresponds with the so-called physical definition of the classical Hermite-polynomials on the real line. Moreover if one considers the case where $m=1, n=0$ then clearly $k=0$ and $P_k =1$ as the polynomial null-solutions of the one-dimensional Dirac operator are simply the constants. In this case the Clifford-Hermite polynomials reduce to the classical Hermite polynomials.
\end{remark} 

\begin{remark}
The polynomials $x^t P_k$ and $H_{t,M}(P_k)$ both satisfy the following property:

\[
\px x^t P_k = \left\{ \begin{array}{ll} t x^{t-1} P_k &\quad \mbox{$t$ even} \\ (t-1 + 2k+M)x^{t-1} P_k  &\quad \mbox{$t$ odd}\end{array} \right.
\]

\[
\px H_{t,M}(P_k) = \left\{ \begin{array}{ll} 2t H_{t-1,M}(P_k) &\quad \mbox{$t$ even} \\ 2(t-1 + 2k+M)H_{t-1,M}(P_k)  &\quad \mbox{$t$ odd.}\end{array} \right.
\]

The factor $2$ in the formula for the Hermite polynomials disappears if we use the mathematical definition $D_+ = \px+x$ instead of our physical definition (see also the remark above). So we note that the coefficients are the same in both cases. It is interesting to compare this with the work of Rota et al. in \cite{MR0345826}, where they construct an algebraic theory of special polynomials on the real line. In the terminology of that paper, $x^t P_k$, $t=0,1,2,\ldots$ would be a basic sequence for the operator $\px$ and $H_{t,M}(P_k)$, $t=0,1,2,\ldots$ would be a corresponding Sheffer set.
Moreover, our framework gives a quite natural extension of this theory to higher dimensions and it would be worthwhile to further analyze this correspondence.
\end{remark}

\begin{remark}
The crucial part in our treatment of the Clifford-Hermite polynomials was the replacement of the classical Euclidean dimension $m$ by the super-dimension $M$. The same technique has also been used in \cite{DBS5} to construct an integral on superspace. This integral turned out to be equivalent with the Berezin integral.
\label{remarksuperdim}
\end{remark}

\section{Clifford-Gegenbauer polynomials in superspace}

In the Euclidean case, the Clifford-Gegenbauer polynomials are defined making use of the following inner product on the unit ball $B(1)$ in $\mR^m$ (see \cite{MR1169463})

\[
(f,g)_{\alpha} = \int_{B(1)} \overline{f (\ux)} g(\ux) (1+\ux^2)^{\alpha} dV(\ux).
\]

However, similar to the previous section, a computation of this inner product is sufficient in the case $f=\ux^s P_k$, $g=\ux^t P_l$ with $P_k$ and $P_l$ spherical monogenics of degree $k$ respectively $l$ in $\mR^m$. Under that assumption, the previous integral reduces, using spherical co-ordinates, to

\begin{eqnarray*}
(\ux^s P_k,\ux^t P_l)_{\alpha} &=&\int_{B(1)}  \overline{P_k} \overline{\ux^s} \ux^t P_l (1+\ux^2)^{\alpha} dV(\ux)\\
&=& \int_0^1 r^k r^s r^t r^l (1-r^2)^{\alpha} r^{m-1}dr \int_{\mS^{m-1}} \overline{P_k(\uxi)}  \underline{\overline{ \xi}}^s \underline{\xi}^t P_l(\uxi) d\Sigma(\uxi)\\
&=&  \frac{1}{2} B(\frac{k+s+t+l+m}{2},\alpha+1) \int_{\mS^{m-1}} \overline{P_k(\uxi)}  \underline{\overline{ \xi}}^s \underline{\xi}^t P_l(\uxi) d\Sigma(\uxi)\\
\end{eqnarray*}

with $B(x,y)=\Gamma(x) \Gamma(y)/ \Gamma(x+y)$ the Beta-function.

Again this inner product consists of two parts: a radial part and an angular part which is an inner product on the unit sphere. This second part is treated in the same way as in section \ref{clhermpol}. 
Restricting ourselves to spaces of the type $R(P_k)$ as in section \ref{clhermpol} we are thus lead to the following definition, where we have again replaced the Euclidean dimension $m$ by the super-dimension $M$.

\begin{definition}
Let $2 \beta = M+2k$, then the bilinear form $<,>_{\alpha}$ (parametrized by $\alpha$) is defined by
\[
\begin{array}{lll}
<x^{2s} P_k,x^{2t} P_k>_{\alpha} &=&  (-1)^{s+t} \frac{1}{2} B(s+t+\beta ,\alpha+1)\\
<x^{2s+1} P_k,x^{2t} P_k>_{\alpha} &=&  0\\
<x^{2s} P_k,x^{2t+1} P_k>_{\alpha} &=&  0\\
<x^{2s+1} P_k,x^{2t+1} P_k>_{\alpha} &=&  (-1)^{s+t} \frac{1}{2} B(s+t+\beta +1,\alpha+1)\\
\end{array}
\]

extended by linearity to the whole of $R(P_k)$.
\end{definition}

This bilinear form is well-defined if and only if $\alpha \not \in - \mN$ and $M \not \in -2\mN$.

Now we introduce the following important operator

\[
D_{\alpha} = (1+x^2)\px+2 (1+\alpha)x,
\]

which satisfies $D_{\alpha}(R(P_k)) \subset R(P_k)$ because of Corollary \ref{basicrel}.

This operator behaves well with respect to the bilinear form $<,>_{\alpha}$ as is shown in the following proposition.

\begin{proposition}
The operators $\px$ and $D_{\alpha}$ are dual with respect to $<,>_{\alpha}$, i.e.

\[
<D_{\alpha} p_i P_k, p_j P_k>_{\alpha} = < p_i P_k, \px p_j P_k>_{\alpha+1}, 
\]
with $p_i P_k$, $p_j P_k \in R(P_k)$, where $p_i$ and $p_j$ are polynomials in the vector variable $x$.
\label{dualitygeg}
\end{proposition}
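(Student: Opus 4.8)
The plan is to reduce the identity to the monomial basis and verify it by a direct computation, in the spirit of the proof of Proposition \ref{dualityherm}. Since $<,>_{\alpha}$ and both operators are $\mR$-linear, it suffices to treat $p_i P_k = x^a P_k$ and $p_j P_k = x^b P_k$ for single powers $a,b$. First I would record the action of $D_{\alpha}$ on the basis using Corollary \ref{basicrel}:
\begin{eqnarray*}
D_{\alpha} x^{2s} P_k &=& 2s\, x^{2s-1} P_k + (2s+2+2\alpha)\, x^{2s+1} P_k,\\
D_{\alpha} x^{2s+1} P_k &=& (2k+M+2s)\, x^{2s} P_k + (2k+M+2s+2+2\alpha)\, x^{2s+2} P_k.
\end{eqnarray*}
Thus $D_{\alpha}$, exactly like $\px$, reverses the parity of the power, whereas $<,>_{\alpha}$ pairs only powers of equal parity.

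This parity structure immediately disposes of half the cases. If $a$ and $b$ have the same parity, then on the left $D_{\alpha} x^a P_k$ has parity opposite to $x^b P_k$, so $<D_{\alpha} x^a P_k, x^b P_k>_{\alpha} = 0$; on the right $\px x^b P_k$ has parity opposite to $x^a P_k$, so $<x^a P_k, \px x^b P_k>_{\alpha+1} = 0$, and the identity holds trivially. Only the two mixed cases remain: $(a,b)=(2s,2t+1)$ and $(a,b)=(2s+1,2t)$.

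These mixed cases are the real content. Taking $(a,b)=(2s,2t+1)$ and writing $p=s+t+\beta$, expanding the left side via the formulae above gives a combination of $B(p,\alpha+1)$ and $B(p+1,\alpha+1)$, while the right side (note the subscript $\alpha+1$, which raises the second Beta-argument by one) produces a multiple of $B(p,\alpha+2)$. The main obstacle is to show these coincide; I expect to rely on the two Beta-function recurrences
\[
B(p+1,\alpha+1) = \frac{p}{p+\alpha+1}\, B(p,\alpha+1), \qquad B(p,\alpha+2) = \frac{\alpha+1}{p+\alpha+1}\, B(p,\alpha+1),
\]
both immediate from $B(x,y)=\Gamma(x)\Gamma(y)/\Gamma(x+y)$ and $\Gamma(z+1)=z\Gamma(z)$. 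After substituting, both sides collapse to $(-1)^{s+t}(1+\alpha)(t+\beta)\,B(p,\alpha+1)/(p+\alpha+1)$.

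The case $(a,b)=(2s+1,2t)$ is handled identically, the same recurrences forcing both sides to equal $-(-1)^{s+t}\,t\,(1+\alpha)\,B(p,\alpha+1)/(p+\alpha+1)$. The conceptual heart of the argument is that the shift $\alpha \mapsto \alpha+1$ on the right is precisely what the second Beta-argument bookkeeping demands: unlike the Hermite case, where $\px$ and $D_+$ are self-dual with respect to one and the same form, here raising $\alpha$ by one is exactly compensated by the recurrence $B(p,\alpha+2)=\tfrac{\alpha+1}{p+\alpha+1}B(p,\alpha+1)$, which is why duality holds between the two parametrized forms.
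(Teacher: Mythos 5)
Your proposal is correct and follows essentially the same route as the paper: reduce by bilinearity to monomials $x^aP_k$, $x^bP_k$, note that the same-parity cases vanish on both sides, and verify the two mixed cases by a direct Beta-function computation (the paper writes out the Gamma-function quotients where you invoke the two recurrences, but the calculation is the same, and your claimed common values in both cases check out).
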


\begin{proof}
It suffices to prove the proposition for $<D_{\alpha} x^{2s+1} P_k,x^{2t} P_k>_{\alpha}$, $<D_{\alpha} x^{2s} P_k,x^{2t+1} P_k>_{\alpha}$, $<D_{\alpha} x^{2s+1} P_k,x^{2t+1} P_k>_{\alpha}$ and $<D_{\alpha} x^{2s} P_k,x^{2t} P_k>_{\alpha}$. We only calculate the first one, the others are completely similar.

\begin{eqnarray*}
&&<D_{\alpha} x^{2s+1} P_k,x^{2t} P_k>_{\alpha}\\
 &=&2(\alpha+1) <x^{2s+2} P_k,x^{2t} P_k>_{\alpha}+(2k+2s+M) <(1+x^2)x^{2s} P_k,x^{2t} P_k>_{\alpha} \\
&=&(-1)^{s+t+1} \frac{1}{2} (2\alpha+2+2k+M+2s) B(s+t+\beta + 1,\alpha+1)\\
&& +(2k+2s+M)(-1)^{s+t} \frac{1}{2}B(s+t+\beta,\alpha+1)\\
&=&(-1)^{s+t} \frac{1}{2}\Gamma(\alpha+1)\left(-(2\alpha+2+2k+M+2s)\frac{\Gamma(s+t+\beta +1)}{\Gamma(s+t+\beta +\alpha+ 2)} \right.\\
&&\left. +(2k+2s+M) \frac{\Gamma(s+t+\beta )}{\Gamma(s+t+\beta +\alpha+1)} \right)\\
&=&(-1)^{s+t} \frac{1}{2}\Gamma(\alpha+1)\frac{\Gamma(s+t+\beta )}{\Gamma(s+t+\beta +\alpha+ 2)} (-(2\alpha+2+2k+M+2s)
(s+t+\beta) \\
&&+(2k+M+2s)(s+t+\beta+\alpha+1))\\
&=&-(-1)^{s+t} \frac{1}{2}\Gamma(\alpha+1)\frac{\Gamma(s+t+\beta )}{\Gamma(s+t+\beta +\alpha+ 2)} (\alpha+1)2t\\
&=&-(-1)^{s+t}\frac{1}{2}B(s+t+\beta ,\alpha+2)2t\\
&=&<x^{2s+1} P_k, \px x^{2t} P_k>_{\alpha+1}.
\end{eqnarray*}
\end{proof}

We are now able to define the Clifford-Gegenbauer polynomials in superspace.

\begin{definition}
Let $P_k$ be a spherical monogenic of degree $k$. Then

\[
C^{\alpha}_{t,M}(P_k)(x) = D_{\alpha} D_{\alpha+1}\ldots D_{\alpha+t-1}  P_k
\]

is a Clifford-Gegenbauer polynomial of degree $(t,k)$.
\end{definition}

Again we have that, by Corollary \ref{basicrel}, $C^{\alpha}_{t,M}(P_k)(x) = C^{\alpha}_{t,M,k}(x)P_k$, where $C^{\alpha}_{t,M,k}(x)$ is a polynomial in the vector variable $x$, which does not depend on $P_k$, but only on the integer $k$.

Explicitly, we find the following form for the first Clifford-Gegenbauer polynomials:

\begin{eqnarray*}
C^{\alpha}_{0,M}(P_k)(x) &=&P_k\\
C^{\alpha}_{1,M}(P_k)(x) &=&2(1+\alpha)x P_k\\
C^{\alpha}_{2,M}(P_k)(x) &=&[2(2+\alpha)(2k+M+2+2\alpha)x^2+2(2+\alpha)(2k+M)]P_k\\
C^{\alpha}_{3,M}(P_k)(x) &=&4(3+\alpha)(2+\alpha)[(2k+M+2\alpha+4)x^3 + (2k+M+2)x]P_k.
\end{eqnarray*}

Now we have the following recursion relation.

\begin{theorem}[Recursion formula]
\[
C^{\alpha}_{t+1,M}(P_k)(x) = D_{\alpha} C^{\alpha+1}_{t,M}(P_k)(x).
\]
\end{theorem}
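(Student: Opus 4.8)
The plan is to prove the recursion formula
\[
C^{\alpha}_{t+1,M}(P_k)(x) = D_{\alpha} C^{\alpha+1}_{t,M}(P_k)(x)
\]
directly from the definition of the Clifford-Gegenbauer polynomials, reading off the telescoping structure of the product of shifted operators. The key observation is that the definition
\[
C^{\alpha}_{t,M}(P_k)(x) = D_{\alpha} D_{\alpha+1} \ldots D_{\alpha+t-1} P_k
\]
is built from the ordered product of operators $D_{\alpha}, D_{\alpha+1}, \ldots$ whose subscripts increase by one at each successive factor. The crucial point I would emphasize is that the subscripts depend only on the starting parameter and the \emph{position} in the product, so raising the base parameter by one simply shifts the entire block of factors.

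\medskip

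First I would write out the right-hand side explicitly. Applying the definition with base parameter $\alpha+1$ and degree $t$ gives
\[
C^{\alpha+1}_{t,M}(P_k)(x) = D_{\alpha+1} D_{\alpha+2} \ldots D_{\alpha+t} P_k,
\]
since the factors now run from $D_{(\alpha+1)}$ up to $D_{(\alpha+1)+t-1} = D_{\alpha+t}$. Next I would prepend the factor $D_{\alpha}$ on the left, obtaining
\[
D_{\alpha} C^{\alpha+1}_{t,M}(P_k)(x) = D_{\alpha} D_{\alpha+1} D_{\alpha+2} \ldots D_{\alpha+t} P_k.
\]

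\medskip

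Finally I would identify the right-hand side above with $C^{\alpha}_{t+1,M}(P_k)(x)$. By the definition applied with base parameter $\alpha$ and degree $t+1$, one has
\[
C^{\alpha}_{t+1,M}(P_k)(x) = D_{\alpha} D_{\alpha+1} \ldots D_{\alpha+t} P_k,
\]
which is precisely the same ordered product of $t+1$ operators, running from $D_{\alpha}$ to $D_{\alpha+(t+1)-1} = D_{\alpha+t}$. Comparing the two expressions yields the claimed identity. There is essentially no obstacle here: the statement is an immediate consequence of how the defining product is indexed, and the only thing requiring care is verifying that the subscript bookkeeping matches at both ends of the product, namely that the topmost factor is $D_{\alpha+t}$ in each case. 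Because every individual $D_{\alpha+j}$ preserves $R(P_k)$ by Corollary \ref{basicrel}, all intermediate expressions remain in $R(P_k)$, so the manipulation is well-defined throughout.
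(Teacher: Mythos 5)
Your proposal is correct and is essentially identical to the paper's own proof: both simply unfold the definition $C^{\alpha}_{t+1,M}(P_k) = D_{\alpha} D_{\alpha+1}\cdots D_{\alpha+t}P_k$ and recognize the tail $D_{\alpha+1}\cdots D_{\alpha+t}P_k$ as $C^{\alpha+1}_{t,M}(P_k)$. The extra remark about the subscript bookkeeping and preservation of $R(P_k)$ is fine but not needed beyond what the paper already states.
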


\begin{proof}
It is immediately calculated that
\begin{eqnarray*}
C^{\alpha}_{t+1,M}(P_k)(x)&=& D_{\alpha} D_{\alpha+1}\ldots D_{\alpha+t}  P_k\\
&=& D_{\alpha} \left( D_{\alpha+1}\ldots D_{\alpha+t}  P_k \right)\\
&=& D_{\alpha} C^{\alpha+1}_{t,M}(P_k)(x).
\end{eqnarray*}
\end{proof}

Clifford-Gegenbauer polynomials of different degree are orthogonal, as is expressed in the following theorem.

\begin{theorem}[Orthogonality relation]
If $s \neq t$ then
\[
<C^{\alpha}_{s,M}(P_k)(x),C^{\alpha}_{t,M}(P_k)(x)>_{\alpha}=0.
\]
\end{theorem}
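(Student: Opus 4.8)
The plan is to imitate the proof of the Hermite orthogonality relation, using the duality of Proposition \ref{dualitygeg} to move all the raising operators from one slot of the bilinear form to the other; the one genuinely new feature is that each application of the duality raises the parameter of the form from $\alpha$ to $\alpha+1$, so the bookkeeping must be done with care. Since $<,>_{\alpha}$ is symmetric (its defining values depend symmetrically on $s$ and $t$, and the mixed parity entries vanish), I may assume without loss of generality that $s>t$.

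First I write the left argument by unfolding the definition,
\[
C^{\alpha}_{s,M}(P_k)(x) = D_{\alpha} D_{\alpha+1}\ldots D_{\alpha+s-1} P_k ,
\]
and observe that the outermost factor $D_{\alpha}$ is precisely the operator dual to $\px$ with respect to $<,>_{\alpha}$. Applying Proposition \ref{dualitygeg} peels off this factor and raises the parameter by one:
\[
<C^{\alpha}_{s,M}(P_k),C^{\alpha}_{t,M}(P_k)>_{\alpha}
= <D_{\alpha+1}\ldots D_{\alpha+s-1} P_k,\; \px C^{\alpha}_{t,M}(P_k)>_{\alpha+1}.
\]
The point is that the new outermost factor $D_{\alpha+1}$ is exactly the operator dual to $\px$ with respect to the new form $<,>_{\alpha+1}$; this matching is built into the definition of the Clifford-Gegenbauer polynomials as a telescoping product with ascending parameters, and it is what makes the iteration run. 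Since every intermediate expression stays in $R(P_k)$ and is therefore of the form $p\, P_k$, I may apply Proposition \ref{dualitygeg} repeatedly, peeling off $D_{\alpha}, D_{\alpha+1}, \ldots, D_{\alpha+s-1}$ in turn while accumulating powers of $\px$ on the right and incrementing the parameter each time. After $s$ steps this yields
\[
<C^{\alpha}_{s,M}(P_k),C^{\alpha}_{t,M}(P_k)>_{\alpha}
= <P_k,\; \px^s\, C^{\alpha}_{t,M}(P_k)>_{\alpha+s}.
\]

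Finally I close the argument by a degree count. The polynomial $C^{\alpha}_{t,M}(P_k)(x)$ lies in the span of $x^0 P_k, \ldots, x^t P_k$, hence has degree at most $t$ in the vector variable $x$. By Corollary \ref{basicrel}, each application of $\px$ lowers this degree by one, so $\px^s$ with $s>t$ annihilates it, and the right-hand side vanishes. The only delicate step is the parameter bookkeeping in the iteration; once one notices that the ascending chain $D_{\alpha},D_{\alpha+1},\ldots$ was tailored precisely to keep the operator in step with the shifting form, the computation is routine, and the degree argument then finishes it exactly as in the Hermite case.
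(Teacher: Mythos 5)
Your proposal is correct and follows essentially the same route as the paper: assume $s>t$, apply Proposition \ref{dualitygeg} repeatedly to transfer the factors $D_{\alpha},\ldots,D_{\alpha+s-1}$ as powers of $\px$ onto the right slot (raising the parameter to $\alpha+s$), and conclude by Corollary \ref{basicrel} since $\px^s$ annihilates an element of degree at most $t<s$ in $x$. The paper's proof is just a terser version of the same argument; your explicit bookkeeping of the shifting parameter is a welcome elaboration.
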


\begin{proof}
Suppose $s>t$. Then 
\begin{eqnarray*}
<C^{\alpha}_{s,M}(P_k)(x),C^{\alpha}_{t,M}(P_k)(x)>_{\alpha}&=& <D_{\alpha} D_{\alpha+1}\ldots D_{\alpha+s-1}  P_k,C^{\alpha}_{t,M}(P_k)(x)>_{\alpha}\\
&=&  < P_k, (\px)^s C^{\alpha}_{t,M}(P_k)(x)>_{\alpha+s}\\
&=&0,
\end{eqnarray*}
by Proposition \ref{dualitygeg} and Corollary \ref{basicrel}.
\end{proof}

The Clifford-Gegenbauer polynomials also satisfy a partial differential equation in superspace. 

\begin{theorem}[Differential equation]
$C^{\alpha}_{t,M}(P_k)(x)$ is a solution of the following differential equation

\[
(1+x^2)\px^2 C^{\alpha}_{t,M}(P_k)(x)+ 2(\alpha+1) x \px C^{\alpha}_{t,M}(P_k)(x) - C(\alpha,t,M,k) C^{\alpha}_{t,M}(P_k)(x) = 0
\]

with 

\[
C(\alpha,t,M,k) = \left\{ \begin{array}{l} 
(2\alpha +t+1)(t+M+2k-1), \quad \mbox{$t$ odd}\\
t(2\alpha +t+M+2k), \quad \mbox{$t$ even}.
\end{array}
\right.
\]
\end{theorem}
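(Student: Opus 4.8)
The plan is to reduce the second-order equation to a first-order \emph{lowering relation} and then reconstitute it, exactly as the Hermite differential equation was obtained from $\px H_{t,M}(P_k) = C(t,M,k)H_{t-1,M}(P_k)$. Writing $C^{\alpha}_{t}$ for $C^{\alpha}_{t,M}(P_k)(x)$, the claim I would isolate first is
\[
\px C^{\alpha}_{t} = C(\alpha,t,M,k)\, C^{\alpha+1}_{t-1},
\]
with \emph{the same} constant $C(\alpha,t,M,k)$ as in the statement. Granting this, the theorem is immediate: applying $D_{\alpha}$ and using the recursion formula $C^{\alpha}_{t} = D_{\alpha}C^{\alpha+1}_{t-1}$ gives $D_{\alpha}\px C^{\alpha}_{t} = C(\alpha,t,M,k)C^{\alpha}_{t}$, and since $D_{\alpha}\px = \big((1+x^2)\px + 2(1+\alpha)x\big)\px = (1+x^2)\px^2 + 2(1+\alpha)x\px$, this is precisely the asserted equation. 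So all the real work lies in the lowering relation, and the final assembly is purely formal.

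I would prove the lowering relation and the differential equation \emph{together}, by induction on $t$ (with $\alpha$ universally quantified), the coupling being: the differential equation at degree $t-1$ yields the lowering relation at degree $t$, which in turn yields the differential equation at degree $t$. The base case is the trivial equation for $C^{\alpha}_{0}=P_k$ together with the direct computation $\px C^{\alpha}_{1} = 2(1+\alpha)(2k+M)P_k = C(\alpha,1,M,k)P_k$. For the inductive step I would start from $\px C^{\alpha}_{t} = \px D_{\alpha}C^{\alpha+1}_{t-1}$ and rewrite $\px D_{\alpha}$ on $R(P_k)$ using the two identities $\px x^2 = x^2\px + 2x$ and $\px x + x\px = 2\mE + M$ (both immediate from Corollary \ref{basicrel}), obtaining $\px D_{\alpha} = (1+x^2)\px^2 - 2\alpha\, x\px + 2(1+\alpha)(2\mE+M)$. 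The inductive hypothesis (the differential equation for $C^{\alpha+1}_{t-1}$) then lets me eliminate the leading term, replacing $(1+x^2)\px^2 C^{\alpha+1}_{t-1}$ by $C(\alpha+1,t-1,M,k)C^{\alpha+1}_{t-1} - 2(\alpha+2)x\px C^{\alpha+1}_{t-1}$.

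The key simplification — and the step I expect to be the main obstacle — is that after this substitution the first-order and Euler contributions collapse: the combination $-2\alpha x\px - 2(\alpha+2)x\px + 2(1+\alpha)(\px x + x\px)$ reduces to $2(1+\alpha)[\px,x]$. Now $[\px,x]$ is not scalar on all of $R(P_k)$, but it acts as $(2k+M)$ on even-degree elements and as $-(2k+M-2)$ on odd-degree elements; since $C^{\alpha+1}_{t-1}$ carries the single parity of $t-1$, this already pins $\px C^{\alpha}_{t}$ down to a scalar multiple of $C^{\alpha+1}_{t-1}$. It then remains to check, separately for $t$ even and $t$ odd, that the resulting scalar $C(\alpha+1,t-1,M,k) + 2(1+\alpha)[\px,x]|_{\mathrm{parity}}$ equals the stated $C(\alpha,t,M,k)$; this is a short but attentive manipulation of the two quadratic expressions $t(2\alpha+t+M+2k)$ and $(2\alpha+t+1)(t+M+2k-1)$, and is where the even/odd bookkeeping and the shift $\alpha\mapsto\alpha+1$ must be handled with care. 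If one prefers to avoid the coupled induction, the same lowering relation can instead be established by expanding $C^{\alpha}_{t}$ in powers of $x$, deriving the coefficient recursions from the action of $D_{\alpha}$ on $x^jP_k$, and matching coefficients — the analogue of the explicit Hermite computation, which also remains valid when $M\in-2\mN$.
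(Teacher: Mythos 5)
Your proposal is correct, and its skeleton is exactly the paper's: isolate the first-order lowering relation $\px C^{\alpha}_{t,M}(P_k) = C(\alpha,t,M,k)\,C^{\alpha+1}_{t-1,M}(P_k)$ (with the same constants), then apply $D_{\alpha}$ together with the recursion formula $C^{\alpha}_{t,M}(P_k)=D_{\alpha}C^{\alpha+1}_{t-1,M}(P_k)$ and the factorization $D_{\alpha}\px=(1+x^2)\px^2+2(1+\alpha)x\px$. Where you genuinely depart from the paper is in how the inductive step for the lowering relation is verified. The paper expands $C^{\alpha}_{t,M}(P_k)$ in powers of $x$, derives recursions for the coefficients $a_{j}^{i,\alpha}$ from Corollary~\ref{basicrel}, and checks the lowering relation coefficient by coefficient; you instead stay at the operator level, computing $\px D_{\alpha}=(1+x^2)\px^2-2\alpha\, x\px+2(1+\alpha)(2\mE+M)$ on $R(P_k)$ from $[\px,x^2]=2x$ and $\{\px,x\}=2\mE+M$, and then using the differential equation at level $t-1$ with parameter $\alpha+1$ to eliminate the second-order term. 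I verified the two scalar identities your step requires: for $t$ odd, $(t-1)(2\alpha+t+1+M+2k)+2(1+\alpha)(M+2k)=(2\alpha+t+1)(t+M+2k-1)$, and for $t$ even, $(2\alpha+t+2)(t+M+2k-2)-2(1+\alpha)(M+2k-2)=t(2\alpha+t+M+2k)$; both hold, and your parity argument for $[\px,x]$ is sound because $C^{\alpha+1}_{t-1,M}(P_k)$ involves only powers $x^{j}P_k$ with $j\equiv t-1 \pmod 2$, on which $[\px,x]$ acts as the scalar $M+2k$ (even $j$) or $-(M+2k-2)$ (odd $j$). What your route buys is the elimination of all coefficient bookkeeping and manifest validity for every $M$, since no bilinear form enters; what the paper's route buys is that its coefficient recursions are immediately recycled to establish the explicit formulae for the $a_{j}^{i,\alpha}$ in Theorem~\ref{explformgegenbauer}.
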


\begin{proof}
The theorem can be proved using induction on $t$. The cases where $t=0,1$ are easily checked.  
We write the following expansion of the Gegenbauer polynomials

\begin{equation}
\begin{array}{lll}
C^{\alpha}_{2t,M}(P_k) &=& \sum_{i=0}^t a_{2i}^{2t,\alpha} x^{2i} P_k\\
C^{\alpha}_{2t+1,M}(P_k) &=& \sum_{i=0}^t a_{2i+1}^{2t+1,\alpha} x^{2i+1} P_k.\\
\end{array}
\end{equation}

The recursion formula combined with Corollary \ref{basicrel} leads to the following relation between the coefficients

\[
\begin{array}{lll}
a_{2i}^{2t,\alpha} &=& (2i+2k+M) a_{2i+1}^{2t-1,\alpha+1} + (2\alpha+2i+M+2k) a_{2i-1}^{2t-1,\alpha+1}\\
a_{2i+1}^{2t+1,\alpha} &=& (2i+2)a_{2i+2}^{2t,\alpha+1} +2(1+\alpha+i)a_{2i}^{2t,\alpha+1}.
\end{array}
\]

We need to prove the following:

\[
\begin{array}{lll}
\px C^{\alpha}_{2t,M}(P_k) &=&2t (2\alpha+2t+M+2k) C^{\alpha+1}_{2t-1,M}(P_k)\\
\px C^{\alpha}_{2t+1,M}(P_k) &=& (2 \alpha+2t+2)(2t+2k+M) C^{\alpha+1}_{2t,M}(P_k)\\
\end{array}
\]

or, in terms of the $a_j^{i,\alpha}$:

\[
\begin{array}{lll}
2 i a_{2i}^{2t,\alpha} &=& 2t (2\alpha+2t+M+2k) a_{2i-1}^{2t-1,\alpha+1}\\
(2k+2i+M) a_{2i+1}^{2t+1,\alpha} &=&(2 \alpha+2t+2)(2t+2k+M)a_{2i}^{2t,\alpha+1}.
\end{array}
\]

Suppose now that the theorem holds for $C^{\alpha}_{t,M}(P_k) $, $t \leq 2s$. We show that it also holds for $t=2s+1$. Indeed,

\begin{eqnarray*}
&&(2k+2i+M) a_{2i+1}^{2s+1,\alpha}\\
&=& (2k+2i+M) ((2i+2)a_{2i+2}^{2s,\alpha+1} +(2+2\alpha+2i)a_{2i}^{2s,\alpha+1})\\
&=&(2k+2i+M) ( 2s (2\alpha+2s+M+2k+2) a_{2i+1}^{2s-1,\alpha+2} +(2+2\alpha+2i)a_{2i}^{2s,\alpha+1})\\
&=&(2k+2i+M)(2+2\alpha+2i)a_{2i}^{2s,\alpha+1} \\
&&+2s (2\alpha+2s+M+2k+2) (a_{2i}^{2s,\alpha+1}  - (2\alpha+2i+M+2k+2) a_{2i-1}^{2s-1,\alpha+2})\\
&=&(2 \alpha+2s+2)(2s+2k+M)a_{2i}^{2s,\alpha+1}\\
&& + (2\alpha+2i+M+2k+2) (2 i a_{2i}^{2s,\alpha+1} - 2s (2\alpha+2s+M+2k+2) a_{2i-1}^{2s-1,\alpha+2})\\
&=&(2 \alpha+2s+2)(2s+2k+M)a_{2i}^{2s,\alpha+1}.
\end{eqnarray*}

Similarly we prove that if the theorem holds for $t \leq 2s+1$, then it also holds for $t=2s+2$.
\end{proof}

Now we can give general formulae for the coefficients of the Clifford-Gegenbauer polynomials, where we use the notations of the previous proof.
\begin{theorem}[Explicit form]
If $M \not \in -2 \mN$ and $\alpha \not \in -\mN$, then the coefficients in the expansion of the Clifford-Gegenbauer polynomials take the following form:
\begin{eqnarray*}
a_{2i}^{2t,\alpha} &=& 2^{2t}\left( \begin{array}{l}t\\i \end{array} \right)\frac{\Gamma(t+k+M/2)}{\Gamma(i+k+M/2)} (\alpha + t + 1)_t (\alpha +t+k+M/2)_i\\
a_{2i+1}^{2t+1,\alpha} &=& 2^{2t+1}\left( \begin{array}{l}t\\i \end{array} \right)\frac{\Gamma(t+1+k+M/2)}{\Gamma(i+1+k+M/2)} (\alpha + t + 1)_{t+1} (\alpha +t+k+M/2+1)_i
\end{eqnarray*}
with $(a)_p = a (a+1) \ldots (a+p-1) $ the Pochhammer symbol.
\label{explformgegenbauer}
\end{theorem}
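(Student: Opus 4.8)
The plan is to follow the template of the proof of Theorem~\ref{explformhermite}, taking as given the two ``lowering'' identities for the coefficients that were established inside the proof of the differential equation, namely
\[
2i\, a_{2i}^{2t,\alpha} = 2t\,(2\alpha+2t+M+2k)\, a_{2i-1}^{2t-1,\alpha+1}
\]
and
\[
(2k+2i+M)\, a_{2i+1}^{2t+1,\alpha} = (2\alpha+2t+2)(2t+2k+M)\, a_{2i}^{2t,\alpha+1}.
\]
The first expresses an even coefficient through a lower odd one, the second an odd coefficient through a lower even one; applying them alternately lowers the pair of indices $(2i,2t)$ to $(2i-2,2t-2)$ while raising the parameter from $\alpha$ to $\alpha+2$ at each such double step. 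Rewriting the linear factors as $2\alpha+2t+M+2k=2(\alpha+t+k+M/2)$, $2t+2k+M=2(t+k+M/2)$, and so on, makes the eventual appearance of Pochhammer symbols and $\Gamma$-quotients transparent.

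First I would handle the even coefficients. Eliminating the intervening odd coefficient between the two identities (the second taken with $i\to i-1$, $t\to t-1$, $\alpha\to\alpha+1$) produces a pure even recurrence of the shape
\[
a_{2i}^{2t,\alpha} = 4\,\frac{t\,(t-1+k+M/2)\,(\alpha+t+1)(\alpha+t+k+M/2)}{i\,(i-1+k+M/2)}\; a_{2i-2}^{2t-2,\alpha+2}.
\]
Iterating this $i$ times reduces $a_{2i}^{2t,\alpha}$ to a product of linear factors times $a_{0}^{2t-2i,\alpha+2i}$, and the accumulated factors telescope into $2^{2i}$, the binomial $\binom{t}{i}$, a quotient of Gamma functions, and the two Pochhammer symbols $(\alpha+t+1)_i$ and $(\alpha+t+k+M/2)_i$. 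It then remains to evaluate the base quantity $a_{0}^{2t,\alpha}$. Putting $i=0$ in the coefficient recurrence coming from the recursion formula gives $a_{0}^{2t,\alpha}=(2k+M)\,a_{1}^{2t-1,\alpha+1}$, and feeding this into the second lowering identity (at $i=0$, $t\to t-1$, $\alpha\to\alpha+1$) yields
\[
a_{0}^{2t,\alpha} = 4\,(\alpha+t+1)(t-1+k+M/2)\; a_{0}^{2t-2,\alpha+2},
\]
which iterates down to $a_{0}^{0,\cdot}=1$ (because $C^{\alpha}_{0,M}(P_k)=P_k$) and gives $a_{0}^{2t,\alpha}=2^{2t}\,\Gamma(t+k+M/2)/\Gamma(k+M/2)\,(\alpha+t+1)_t$. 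Substituting this into the iterated even recurrence, the $\Gamma(t-i+k+M/2)$ and $\Gamma(k+M/2)$ factors cancel and the two partial products $(\alpha+t+1)_i\,(\alpha+t+i+1)_{t-i}$ merge into $(\alpha+t+1)_t$, leaving exactly the claimed closed form for $a_{2i}^{2t,\alpha}$.

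For the odd coefficients no further induction is required. Read as
\[
a_{2i+1}^{2t+1,\alpha} = \frac{(2\alpha+2t+2)(2t+2k+M)}{2k+2i+M}\; a_{2i}^{2t,\alpha+1},
\]
the second lowering identity expresses each odd coefficient directly through an already-known even coefficient at parameter $\alpha+1$. Inserting the even formula and using $(t+k+M/2)\,\Gamma(t+k+M/2)=\Gamma(t+1+k+M/2)$ together with $(\alpha+t+1)(\alpha+t+2)_t=(\alpha+t+1)_{t+1}$ reduces the whole thing to the second displayed formula after routine simplification.

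The computations above are all elementary, so I expect the only genuine difficulty to be bookkeeping: one must keep the shift $\alpha\mapsto\alpha+2$ per double step perfectly synchronized across the telescoping products, so that the arguments assemble into precisely $(\alpha+t+1)_t$ and $(\alpha+t+k+M/2)_i$ rather than into Pochhammer symbols with a displaced base or length. Verifying that each application of the paired identities collapses to a single linear factor in numerator and denominator — the same kind of cancellation as the identity $-(2\alpha+2+2k+M+2s)(s+t+\beta)+(2k+M+2s)(s+t+\beta+\alpha+1)=-2t(\alpha+1)$ used in Proposition~\ref{dualitygeg} — is the main routine but error-prone step.
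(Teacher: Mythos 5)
Your proposal is correct and follows essentially the same route as the paper: iterate the paired lowering identities to reduce $a_{2i}^{2t,\alpha}$ to $a_{0}^{2t-2i,\alpha+2i}$, evaluate $a_{0}^{2t,\alpha}$ separately by the same kind of telescoping down to $a_0^{0,\cdot}=1$, and obtain the odd coefficients directly from the even ones via the second identity. The bookkeeping you flag (the $\alpha\mapsto\alpha+2$ shift per double step and the merging $(\alpha+t+1)_i(\alpha+t+i+1)_{t-i}=(\alpha+t+1)_t$) is exactly how the paper's computation closes.
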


\begin{proof}
We first prove the formula for $a_{2i}^{2t,\alpha}$. We find that, using the expressions from the previous proof:

\begin{eqnarray*}
a_{2i}^{2t,\alpha} &=& \frac{t}{i} (2\alpha+2t+M+2k) a_{2i-1}^{2t-1,\alpha+1}\\
&=& 4 \frac{t}{i} (\alpha+t+M/2+k)(\alpha+t+1) \frac{t+k+M/2-1}{i+k+M/2-1} a_{2i-2}^{2t-2,\alpha+2}\\
&=& \ldots\\
&=&2^{2i}\left( \begin{array}{l}t\\i \end{array} \right)\frac{(t+k+M/2-1)\ldots (t+k-i+M/2)}{(i+k+M/2-1) \ldots (k+M/2)} (\alpha+t+1)\ldots(\alpha+t+i)\\
&& \times (\alpha+t+M/2+k)\ldots (\alpha+t+M/2+k+i-1)a_{0}^{2t-2i,\alpha+2i}\\
&=&2^{2i} \left( \begin{array}{l}t\\i \end{array} \right)\frac{\Gamma(t+k+M/2)}{\Gamma(t+k-i+M/2)}\frac{\Gamma(k+M/2)}{\Gamma(i+k+M/2)}\\
&& \times (\alpha+t+1)_i (\alpha+t+k+M/2)_i a_{0}^{2t-2i,\alpha+2i}.
\end{eqnarray*}

We need a formula for $a_{0}^{2t,\alpha}$. This can be done as follows:

\begin{eqnarray*}
a_{0}^{2t,\alpha} &=& (2k+M) a_{1}^{2t-1,\alpha+1}\\
&=&(2k+M) \frac{2t+M+2k-2}{2k+M} (2\alpha + 2t+2) a_{0}^{2t-2,\alpha+2}\\
&=&4 (t+k-1+M/2) (\alpha +t+1)a_{0}^{2t-2,\alpha +2}\\
&=&\ldots\\
&=& 2^{2t} (t+k-1+M/2)\ldots (k+M/2)(\alpha +t+1)\ldots (\alpha +2t) a_{0}^{0,\alpha+2t}\\
&=&2^{2t} \frac{\Gamma(t+k+M/2)}{\Gamma(k+M/2)}(\alpha +t+1)_t.
\end{eqnarray*}

Combining these results gives the formula stated in the theorem. The formula for $a_{2i+1}^{2t+1,\alpha}$ follows from

\[
a_{2i+1}^{2t+1,\alpha} = (2\alpha+2t+2 )\frac{2t+2k+M}{2i+2k+M} a_{2i}^{2t,\alpha+1}.
\]
\end{proof}

As we have that $C^{\alpha}_{t,M}(P_k)(x) = C^{\alpha}_{t,M,k}(x)P_k$ with $C^{\alpha}_{t,M,k}(x)$ a polynomial in the vector variable $x$, we can compare this polynomial with special functions on the real line. This leads to the following theorem.

\begin{theorem}
One has that
\begin{eqnarray*}
C^{\alpha}_{2t,M,k}(x) &=& 2^{2t} t! (\alpha + t + 1)_t P_t^{\frac{M}{2} + k-1,\alpha}(1 + 2 x^2)\\
C^{\alpha}_{2t+1,M,k}(x) &=& 2^{2t+1} t! (\alpha + t + 1)_{t+1} x P_t^{\frac{M}{2} + k,\alpha}(1 + 2 x^2),
\end{eqnarray*}
where $P_t^{(\alpha,\beta)}$ are the Jacobi polynomials on the real line.
\end{theorem}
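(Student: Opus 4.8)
The plan is to prove the theorem exactly as the preceding Laguerre result was established: by comparing the explicit coefficients of $C^{\alpha}_{t,M,k}(x)$ furnished by Theorem \ref{explformgegenbauer} with a suitable closed-form expansion of the Jacobi polynomials, and reading off the proportionality constant. No new analytic input is needed; the whole argument is a matching of polynomial coefficients.

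First I would record a convenient explicit expression for the Jacobi polynomials, namely the hypergeometric form
\[
P_t^{(a,b)}(z) = \frac{\Gamma(a+t+1)}{t!\,\Gamma(a+b+t+1)}\sum_{i=0}^t \binom{t}{i}\frac{\Gamma(a+b+t+i+1)}{\Gamma(a+i+1)}\left(\frac{z-1}{2}\right)^i,
\]
and then substitute $z = 1+2x^2$, so that $(z-1)/2 = x^2$ and $P_t^{(a,b)}(1+2x^2)$ becomes a polynomial in $x^2$ of degree $t$. For the even-degree statement I would specialize $a = M/2+k-1$ and $b = \alpha$; for the odd-degree statement $a = M/2 +k$ and $b=\alpha$, the extra factor $x$ being supplied by the definition of $C^{\alpha}_{2t+1,M}(P_k)$.

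The core computation is then to rewrite the coefficient of $x^{2i}$ in $P_t^{(M/2+k-1,\alpha)}(1+2x^2)$ using the identity $\Gamma(c+i)/\Gamma(c) = (c)_i$. With $a = M/2+k-1$ one has $a+t+1 = t+k+M/2$, $a+i+1 = i+k+M/2$ and $a+b+t+1 = \alpha+t+k+M/2$, so that the ratio $\Gamma(a+b+t+i+1)/\Gamma(a+b+t+1)$ collapses to the Pochhammer symbol $(\alpha+t+k+M/2)_i$. Comparing the result with
\[
a_{2i}^{2t,\alpha} = 2^{2t}\binom{t}{i}\frac{\Gamma(t+k+M/2)}{\Gamma(i+k+M/2)}(\alpha+t+1)_t(\alpha+t+k+M/2)_i
\]
shows that every coefficient $a_{2i}^{2t,\alpha}$ equals the corresponding coefficient of $P_t^{(M/2+k-1,\alpha)}(1+2x^2)$ times the single, $i$-independent factor $2^{2t}\,t!\,(\alpha+t+1)_t$, which is precisely the claimed normalization. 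The odd case follows in the same way from the expression for $a_{2i+1}^{2t+1,\alpha}$, with the common factor now $2^{2t+1}\,t!\,(\alpha+t+1)_{t+1}$ and the first Jacobi index shifted by one.

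The step I expect to require the most care is purely bookkeeping rather than conceptual: one must pick the form of the Jacobi polynomial whose summand already carries the $\binom{t}{i}$, the ratio $\Gamma(t+k+M/2)/\Gamma(i+k+M/2)$ and the Pochhammer factor $(\alpha+t+k+M/2)_i$ in the same arrangement as in Theorem \ref{explformgegenbauer}, so that the leftover constant factors cleanly out and does not depend on the summation index $i$. A secondary source of confusion is the clash of notation between the Gegenbauer parameter $\alpha$, the Jacobi superscripts, and the abbreviation $2\beta = M+2k$; I would fix names at the outset (using $a,b$ for the Jacobi indices) to keep the Gamma-to-Pochhammer reductions transparent.
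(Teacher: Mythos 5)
Your proposal is correct and follows exactly the paper's own route: the paper likewise proves the theorem by substituting $z=1+2x^2$ into the very same hypergeometric-sum definition of $P_t^{(\alpha,\beta)}$ and matching the resulting coefficients against those of Theorem \ref{explformgegenbauer}, with the constants $2^{2t}t!(\alpha+t+1)_t$ and $2^{2t+1}t!(\alpha+t+1)_{t+1}$ factoring out as you describe. Your Gamma-to-Pochhammer bookkeeping checks out in both the even and odd cases.
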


\begin{proof}
This follows immediately by comparing the coefficients given in Theorem \ref{explformgegenbauer} with the definition of the Jacobi polynomials:

\[
P_t^{(\alpha,\beta)} (x) = \frac{\Gamma (\alpha+t+1)}{t!\Gamma (\alpha+\beta+t+1)} \sum_{i=0}^t \left(  \begin{array}{l} t\\ i \end{array} \right) \frac{\Gamma (\alpha + \beta + t + i + 1)}{\Gamma (\alpha + i + 1)} \left(\frac{x-1}{2}\right)^i.
\]
\end{proof}

\section{A physical application: interpretation of the super-dimension}

In basic quantum mechanics, the harmonic oscillator is of the utmost importance. It satisfies the following Schr\"odinger equation in $\mR^m$

\[
\frac{1}{2}\left(\upx^2 - \ux^2 \right) \phi =  E \phi,
\]
where we have used the language of Clifford analysis and units $\hbar=m=\omega=1$.

So a canonical extension of this model to superspace would be
\[
\frac{1}{2} \left(\px^2 - x^2 \right) \phi =  E \phi,
\]

\noindent
where we have replaced the Dirac operator and the vector variable by their super analogues. A direct calculation now shows that every function $\phi$ of the form $\phi = \exp(x^2/2) H_{t,M}(P_k)$ is a solution of this equation with corresponding energy $E = M/2 + (t+k)$. Furthermore, for a given energy $E_T = M/2 + T$, there are exactly 
\begin{eqnarray*}
\sum_{i=0}^T \dim \cM_i &=& \sum_{i=0}^T  \sum_{j=0}^{\min(i,2n)} \binom{2n}{j} \binom{i-j+m-2}{m-2}\\
&=& \sum_{i=0}^{\min(T,2n)} \binom{2n}{i} \binom{T-i+m-1}{m-1}
\end{eqnarray*}

eigenfunctions. The second equality follows from the Fischer decomposition or by a direct calculation (see \cite{DBS2}).

Moreover this is in correspondence with what would physically be expected. Indeed, the number of eigenfunctions with energy $E_T$ is the total number of possibilities for selecting $T$ particles out of a set of $m$ bosonic and $2n$ fermionic particles. 

This can also be seen in the following way. If we put

\[
\begin{array}{llll}
a^+_i = \frac{\sqrt{2}}{2}(x_i - \pI ) &\quad&a^-_i = \frac{\sqrt{2}}{2}(x_i+\pI)\\
\vspace{-1mm}\\
b^+_{2i} = \frac{1}{2}({x \grave{}}_{2i} + 2 \partial_{{x \grave{}}_{2i-1}})&\quad&b^-_{2i} =\frac{1}{2} ({x \grave{}}_{2i-1} + 2 \partial_{{x \grave{}}_{2i}})\\
\vspace{-1mm}\\
b^+_{2i-1} = \frac{1}{2}({x \grave{}}_{2i-1} - 2 \partial_{{x \grave{}}_{2i}})&\quad&b^-_{2i-1} = \frac{1}{2}(-{x \grave{}}_{2i} + 2 \partial_{{x \grave{}}_{2i-1}})\\
\vspace{-1mm}\\
\end{array}
\]

we can rewrite the Hamiltonian as

\[
H= \frac{1}{2}(\px^2-x^2) = \sum_{i=1}^m a_i^+ a_i^- +\sum_{i=1}^{2n} b_i^+ b_i^- + \frac{M}{2}.
\]

As the operators $a_i^{\pm}$, $b_i^{\pm}$ satisfy

\[
\begin{array}{lll}
\left[a_i^{\pm} , a_j^{\pm}\right] = 0&\quad& \{ b_i^{\pm} , b_j^{\pm}\} =0\\
\vspace{-1mm}\\
\left[a_i^{-} , a_j^{+}\right] = \delta_{ij}&\quad& \{ b_i^{+} , b_j^{-}\} =\delta_{ij}\\
\vspace{-1mm}\\
\left[a_i^{\pm} , b_j^{\pm}\right] =0&\quad&\left[a_i^{\mp} , b_j^{\pm}\right] =0,
\end{array}
\]

this is the canonical realization of an oscillator with $m$ bosonic and $2n$ fermionic degrees of freedom. The ground level energy is given by $M/2$, which gives us a physical interpretation of the super-dimension. The reader should compare this approach e.g. with the one given in \cite{MR830398} for the purely fermionic case.

Finally, the Hamiltonian can also be factorized in toto using Clifford numbers. Indeed, putting

\[
\begin{array}{lll}
Q_+ = \frac{1}{2}(\px + x) &\quad& Q_- =\frac{1}{2}(\px - x)
\end{array}
\]

we have that

\[
H = \left\{ Q_+, Q_-\right\}.
\]

\section{Possible generalizations}

It is also interesting to note that, although we have focused on the case of superspace, it is possible to treat these special functions in a more general way. First of all, we need an abstract definition of a type of algebras in which the construction can be done. This leads to the following characterization:

Let $\cP$ be a complex algebra satisfying

\begin{itemize}
\item $\cP$ is graded

\[
\cP = \bigoplus_{k=0}^{\infty} \cP_k \qquad \mbox{with} \qquad  \cP_i \cP_j \subseteq \cP_{i+j}.
\]

\item There is a linear operator $\px$ on $\cP$ with

\[
\px: \cP_k \longrightarrow \cP_{k-1}.
\]

In particular : $\px (\cP_0) = 0$. This operator is called the Dirac operator.

\item There exists an element $x \in \cP_1$ such that

\[
\left\{ x, \px \right\} = 2\mE +M,\qquad M \in \mC
\] 

where $\mE$ is the Euler operator defined as $\mE \cP_k = k \cP_k$ (extended by linearity to the whole of $\cP$). Furthermore we call $M$ the formal dimension of the algebra $\cP$.
\end{itemize}

If $\cP$ is an algebra satisfying the above axioms, one can proceed as in this paper to construct Clifford-Hermite and Clifford-Gegenbauer polynomials in $\cP$. However, it remains an open problem to construct examples of algebras $\cP$ satisfying the previous axioms, and having a formal dimension $M \not \in \mZ$.

\section{Conclusions}
In this paper we have generalized the Clifford-Hermite and the Clifford-Gegenbauer polynomials to the framework of Clifford analysis in superspace. This has been done in a purely symbolic way by exploiting the analogy with the classical case. This approach, where one replaces the Euclidean dimension by the super-dimension, was used in previous work to construct an integral over the supersphere and on the whole superspace, leading in this way to the Berezin integral. 

We have proven the basic properties of these special functions, such as recursion relations, orthogonality, differential equations and the like. We have also established a connection with special functions on the real line. Finally we have shown that the Clifford-Hermite polynomials can be seen as solutions of a super harmonic oscillator, thus proving that our framework of Clifford analysis in superspace is physically relevant and meanwhile giving an interpretation to the super-dimension.

In further work we plan to use the Clifford-Hermite polynomials to study generalizations of the Fourier and Radon transforms to superspace. In particular we will use them to determine a singular value decomposition of the super Radon transform.

\vspace{5mm}

\textit{Acknowledgement}
The authors would like to thank Fred Brackx for a careful proofreading of the manuscript.

\end{document}